\documentclass[a4paper,UKenglish,cleveref, autoref, thm-restate]{lipics-v2021}




\usepackage{mathtools}

\usepackage{tikz}
\usepackage{tikz-qtree}

\renewcommand{\arraystretch}{1.8}

\newcommand{\lBrace}{\{\!\!\{}
\newcommand{\rBrace}{\}\!\!\}}

\bibliographystyle{plainurl}

\title{The Trie Measure, Revisited} 


\author{Jarno N. {Alanko}}{University of Helsinki, Finland}{jarno.alanko@helsinki.fi }{https://orcid.org/0000-0002-8003-9225}{}

\author{Ruben {Becker}}{Ca' Foscari University of Venice, Italy}{rubensimon.becker@unive.it}{https://orcid.org/0000-0002-3495-3753}{}
\author{Davide {Cenzato}}{Ca' Foscari University of Venice, Italy}{davide.cenzato@unive.it}{https://orcid.org/0000-0002-0098-3620}{}
\author{Travis {Gagie}}{Dalhousie University, Halifax, Nova Scotia, Canada}{Travis.Gagie@dal.ca }{https://orcid.org/0000-0003-3689-327X}{}
\author{Sung-Hwan {Kim}}{Ca' Foscari University of Venice, Italy}{sunghwan.kim@unive.it}{https://orcid.org/0000-0002-1117-5020}{}
\author{Bojana {Kodric}}{Ca' Foscari University of Venice, Italy}{bojana.kodric@unive.it}{https://orcid.org/0000-0001-7242-0096}{}
\author{Nicola {Prezza}}{Ca' Foscari University of Venice, Italy}{nicola.prezza@unive.it}{https://orcid.org/0000-0003-3553-4953}{}

\funding{\textit{Ruben Becker, Davide Cenzato, Sung-Hwan Kim, Bojana Kodric, Nicola Prezza}: Funded by the European Union (ERC, REGINDEX, 101039208). 
Views and opinions expressed are however those of the author(s) only and do not necessarily reflect those of the European Union or the European Research Council Executive Agency. Neither the European Union nor the granting authority can be held responsible for them.
\textit{Travis Gagie}: Funded by NSERC Discovery Grant RGPIN-07185-2020.}


\authorrunning{J. Alanko et al.} 

\Copyright{Jarno Alanko et al.} 

\ccsdesc[500]{Theory of computation~Data compression}
\ccsdesc[500]{Theory of computation~Design and analysis of algorithms}
\ccsdesc[100]{Theory of computation~Dynamic programming}
\ccsdesc[100]{Theory of computation~Discrete optimization}

\keywords{Succinct data structures, degenerate strings, integer encoding} 

\category{} 

\relatedversion{} 

\supplement{}
\supplementdetails[subcategory={}, cite={}, swhid={}]{Software}{https://github.com/regindex/trie-measure}



\nolinenumbers 

\EventEditors{Paola Bonizzoni and Veli M\"{a}kinen}
\EventNoEds{2}
\EventLongTitle{36th Annual Symposium on Combinatorial Pattern Matching (CPM 2025)}
\EventShortTitle{CPM 2025}
\EventAcronym{CPM}
\EventYear{2025}
\EventDate{June 17--19, 2025}
\EventLocation{Milan, Italy}
\EventLogo{}
\SeriesVolume{331}
\ArticleNo{20}



\DeclareMathOperator{\argmin}{argmin}

\DeclareMathOperator{\trie}{trie}
\DeclareMathOperator{\enc}{enc}

\DeclareMathOperator{\modulo}{mod}

\let\epsilon\varepsilon

\usepackage{dsfont}
\newcommand{\ones}{\mathds{1}}



\usepackage{xspace}

\usepackage[linesnumbered,ruled,vlined,boxed]{algorithm2e}
\newlength{\commentWidth}
\setlength{\commentWidth}{7cm}

\let\oldnl\nl
\newcommand{\nonl}{\renewcommand{\nl}{\let\nl\oldnl}}
\DontPrintSemicolon
\SetKwInOut{Input}{Input}\SetKwInOut{Output}{Output}

\usepackage{dashrule}

\usepackage{tikz}
\usetikzlibrary{decorations.pathreplacing}
\usetikzlibrary{plotmarks}
\usetikzlibrary{positioning,automata,arrows}
\usetikzlibrary{shapes.geometric}
\usetikzlibrary{decorations.markings}
\usetikzlibrary{positioning, shapes, arrows}

\PassOptionsToPackage{usenames,dvipsnames,svgnames}{xcolor}
\definecolor{orange}{RGB}{235,90,0}
\definecolor{darkorange}{RGB}{175,30,0}
\definecolor{turkis}{RGB}{131,182,182}
\definecolor{darkturkis}{RGB}{31,82,82}
\definecolor{green}{RGB}{102,180,0}
\definecolor{darkgreen}{RGB}{51,90,0}
\definecolor{myblue}{RGB}{0,0,213}
\definecolor{mydarkblue}{RGB}{0,0,100}
\definecolor{mybrightblue}{HTML}{74B0E4}
\definecolor{mybrighterblue}{HTML}{B3EAFA}
\definecolor{lila}{RGB}{102,0,102}
\definecolor{darkred}{RGB}{139,0,0}
\definecolor{darkyellow}{RGB}{188,135,2}
\definecolor{brightgray}{RGB}{200,200,200}
\definecolor{darkgray}{RGB}{50,50,50}
\definecolor{amaranth}{rgb}{0.9, 0.17, 0.31}
\definecolor{alizarin}{rgb}{0.82, 0.1, 0.26}
\definecolor{amber}{rgb}{1.0, 0.75, 0.0}
\definecolor{green(ryb)}{rgb}{0.4, 0.69, 0.2}
\definecolor{hanblue}{rgb}{0.27, 0.42, 0.81}
\definecolor{grannysmithapple}{rgb}{0.66, 0.89, 0.63}

\begin{document}

\maketitle

\begin{abstract}
In this paper, we study the following problem: given $n$ subsets $S_1, \dots, S_n$ of an integer universe $U = \{0,\dots, u-1\}$, having total cardinality $N = \sum_{i=1}^n |S_i|$, find a prefix-free encoding $\enc : U \rightarrow \{0,1\}^+$ minimizing the so-called \emph{trie measure}, i.e., the total number of edges in the $n$ binary tries $\mathcal T_1, \dots, \mathcal T_n$, where $\mathcal T_i$ is the trie packing the encoded integers $\{\enc(x):x\in S_i\}$. 
We first observe that this problem is equivalent to that of merging $u$ sets with the cheapest sequence of binary unions, a problem which in [Ghosh et al., ICDCS 2015] is shown to be NP-hard.
Motivated by the hardness of the general problem, we focus on particular families of prefix-free encodings. We start by studying the fixed-length \emph{shifted encoding} of [Gupta et al., Theoretical Computer Science 2007]. Given a parameter $0\le a < u$, this encoding sends each $x \in U$ to $(x + a) \modulo u$, interpreted as a bit-string of $\log u$ bits. We develop the first efficient algorithms that find the value of $a$ minimizing the trie measure when this encoding is used. Our two algorithms run in $O(u + N\log u)$ and $O(N\log^2 u)$ time, respectively. 
We proceed by studying \emph{ordered  encodings} (a.k.a. \emph{monotone} or \emph{alphabetic}), and describe an algorithm finding the optimal such encoding in $O(N+u^3)$ time.
Within the same running time, we show how to compute the best \emph{shifted ordered encoding}, provably no worse than \emph{both} the optimal shifted and optimal ordered encodings. 
We provide implementations of our algorithms and discuss how these encodings perform in practice. 
\end{abstract}

\clearpage
\newpage

\section{Introduction}

Consider the problem of encoding a set of integers $S\subseteq U = \{0,\dots, u-1\}$ (without loss of generality, we assume $u$ to be a power of two), so as to minimize the overall number of bits used to represent $S$.
In their seminal work on data-aware measures, Gupta et al.~\cite{GUPTA2007313} proposed and analyzed an encoding for sets of integers based on the idea of packing the integers (seen as strings of $\log u$ bits) into a binary trie: this allows to avoid storing multiple times shared prefixes among the encodings of the integers. 
The number of edges in such a trie is known as the \emph{trie measure} of the set. See Figure \ref{fig:trie-example1} for an example.

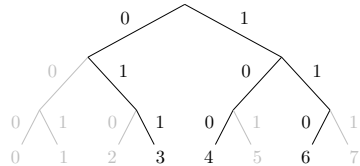
\begin{figure}[h!]
\centering
\begin{tikzpicture}[level distance=1cm,sibling distance=.5cm,
   edge from parent path={(\tikzparentnode) -- (\tikzchildnode)}, scale=.7, transform shape]
\Tree
[
    \edge node[auto=right,pos=.5] {$0$};
    [ 
        \edge [lightgray] node[auto=right,pos=.5] {$0$};
        [
            \edge [lightgray] node[auto=right,pos=.7] {$0$}; [.\textcolor{lightgray}{0} ]
            \edge [lightgray] node[auto=left,pos=.7] {$1$}; [.\textcolor{lightgray}{1} ]
        ]
        \edge node[auto=left,pos=.5] {$1$};
        [
            \edge [lightgray] node[auto=right,pos=.7] {$0$}; [.\textcolor{lightgray}{2} ]
            \edge node[auto=left,pos=.7] {$1$}; [.3 ]
        ]
    ]
    \edge node[auto=left,pos=.5] {$1$};
    [ 
        \edge node[auto=right,pos=.5] {$0$};
        [
            \edge node[auto=right,pos=.7] {$0$}; [.4 ]
            \edge [lightgray] node[auto=left,pos=.7] {$1$}; [.\textcolor{lightgray}{5} ]
        ]
        \edge node[auto=left,pos=.5] {$1$};
        [
            \edge node[auto=right,pos=.7] {$0$}; [.6 ]
            \edge [lightgray] node[auto=left,pos=.7] {$1$}; [.\textcolor{lightgray}{7} ]
        ]
    ]
]
\end{tikzpicture}
\caption{\footnotesize Example of trie encoding the set of integers $\{3,4,6\} \subseteq \{0,1,\dots, 7\}$ over universe of size $u=8$. Black edges belong to the trie. Gray edges do not belong to the trie and are shown only for completeness. Each integer is encoded using $\log 8 = 3$ bits (logarithms are in base 2). The trie has 8 edges, so the \emph{trie measure} for this set using the standard integer encoding is 8.}
\label{fig:trie-example1}
\end{figure}

Gupta et al.\ also showed that this measure approaches worst-case entropy on expectation when the integers are shifted by a uniformly-random quantity modulo $u$, i.e.\ when building the trie over the set $S+a \coloneq \{x+a \modulo u: x\in S\}$ for a uniform $a\in U$. 
See Figure \ref{fig:trie-example2}. 

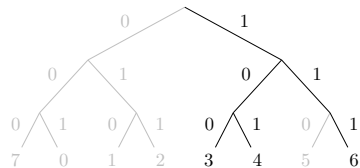
\begin{figure}[h!]
\centering
\begin{tikzpicture}[level distance=1cm,sibling distance=.5cm,
   edge from parent path={(\tikzparentnode) -- (\tikzchildnode)}, scale=.7, transform shape]
\Tree
[
    \edge [lightgray] node[auto=right,pos=.5] {$0$};
    [ 
        \edge [lightgray] node[auto=right,pos=.5] {$0$};
        [
            \edge [lightgray] node[auto=right,pos=.7] {$0$}; [.\textcolor{lightgray}{7} ]
            \edge [lightgray] node[auto=left,pos=.7] {$1$}; [.\textcolor{lightgray}{0} ]
        ]
        \edge [lightgray] node[auto=left,pos=.5] {$1$};
        [
            \edge [lightgray] node[auto=right,pos=.7] {$0$}; [.\textcolor{lightgray}{1} ]
            \edge [lightgray] node[auto=left,pos=.7] {$1$}; [.\textcolor{lightgray}{2} ]
        ]
    ]
    \edge node[auto=left,pos=.5] {$1$};
    [ 
        \edge node[auto=right,pos=.5] {$0$};
        [
            \edge node[auto=right,pos=.7] {$0$}; [.3 ]
            \edge node[auto=left,pos=.7] {$1$}; [.4 ]
        ]
        \edge node[auto=left,pos=.5] {$1$};
        [
            \edge [lightgray] node[auto=right,pos=.7] {$0$}; [.\textcolor{lightgray}{5} ]
            \edge node[auto=left,pos=.7] {$1$}; [.6 ]
        ]
    ]
]
\end{tikzpicture}
\caption{\footnotesize A trie that stores the same set of integers $\{3,4,6\} \subseteq \{0,1,\dots, 7\}$ of Figure \ref{fig:trie-example1}, but with the shifted integer encoding mapping each $x\in U$ to (the binary string of $\log u$ bits) $(x+1)\modulo 8$.
The trie has 6 edges, so the \emph{shifted trie measure} with shift $a=1$ is 6.}
\label{fig:trie-example2}
\end{figure}

In this paper, we revisit this problem in two natural directions: (1) we move from one set to a sequence of sets, and (2) we study more general prefix-free integer encodings. 
More formally, given $n$ subsets $S_1, \dots, S_n$ of $U$, having total cardinality $N = \sum_{i=1}^n |S_i|$, 
we study the problem of finding a prefix-free encoding $\enc : U \rightarrow \{0,1\}^+$ minimizing the \emph{trie measure} of these sets, i.e.\ the total number of edges in the $n$ binary tries $\mathcal T_1, \dots, \mathcal T_n$, where $\mathcal T_i$ is the trie packing the encoded integers $\{\enc(x):x\in S_i\}$.
Solving the problem on set sequences finds applications in data structures for rank/select queries on set sequences---also known in the literature as \emph{degenerate strings}---, an important toolbox in indexes for labeled graphs \cite{GAGIE201767,CotumaccioJACM2023,BOSS,XBWT}. Such applications stem from the recent work of Alanko et al.~\cite{AlankoBPV23}, who showed how to solve rank and select queries on set sequences with a data structure --- the \emph{subset wavelet tree} --- implicitly encoding sets as tries. While \cite{AlankoBPV23} only considered the standard binary (balanced) encoding, their technique works for any prefix-free encoding. As a result, our work can directly be applied to optimize the space of their data structure.
Another application is the offline set intersection problem where the goal to preprocess subsets of the universe, so that later given a set of indices of the sets, one can compute the intersection of the sets space-time efficiently. Arroyuelo and Castillo \cite{ArroyueloC23} presented an adaptive approach that uses a trie representation of sets and they showed that it can benefit from its space-efficient representation.

We begin by showing that
finding the optimal such prefix-free encoding is equivalent to the following natural optimization problem: find the minimum-cost sequence of binary unions that merges $u$ given sets, where the cost of a union is the sum of the two sets' cardinalities. As this problem was shown to be NP-hard by Ghosh et al.~\cite{Ghosh2015}, we focus on particular (hopefully easier) families of encodings. 
We start by studying the shifted encoding of Gupta et al.,
who did not discuss the problem of finding the optimal value of $a\in U$ minimizing the corresponding trie measure. We describe an algorithm solving the problem in $O(u + N\log u)$ time. The algorithm is based on the fact that, under this encoding, the trie measure has a highly periodic sub-structure as a function of the shift $a$.
We then remove the linear dependency on the universe size and achieve running time $O(N\log^2 u)$ by storing this periodic structure with a DAG. $O(N\log u)$ running time is also possible by merging the two ideas, but for space constraints we will describe it in the extended version of this article.
We also want to remark that the general prefix-free encoding problem is not only difficult to compute, but also it requires to store the ordering of the elements in addition to the trie representation. By a simple enumeration, it needs $O(N' \log u)$ bits of space where $N'$ is the cardinality of the union of the sets (i.e., the number of distinct elements over all sets). On the other hand, the shifted encoding only requires $O(\log u)$ bits since it is sufficient to store a single integer $a\in U$.

We then move to \emph{ordered} encodings: here, the encoding must preserve lexicographically the order of the universe $U$ (i.e.\ the standard integer order). In this case, we show that the textbook solution of Knuth \cite{knuth1971optimum} based on dynamic programming, running in $O(N+u^3)$-time, can be adapted to our scenario.  
We observe that essentially the same solution allows also shifting the universe (like in the shifted encoding discussed above) at no additional cost. As a result, we obtain that the best \emph{shifted ordered encoding} can be computed in  $O(N+u^3)$-time as well. This encoding is never worse than the best shifted and the best ordered encodings.

We conclude our paper with an experimental evaluation of our algorithms on real datasets, showing how these encodings perform with respect to the worst and average-case scenario.

\section{Preliminaries and Problem Formulation}
A bit-string is a finite sequence of bits, i.e., an element from $\{0,1\}^*$. 
We count indices from 1, so that for $\beta\in\{0,1\}^+$, $\beta[1]$ is the first element. 
With $\prec$ we denote the lexicographic order among bit-strings and with $|\cdot|$ we denote the length of bit-strings. For two strings $\alpha$ and $\beta$ of possibly different length, each not being a prefix of the other, we use $\oplus$ to denote the (non-commutative) operator defined as $\alpha \oplus \beta = \beta[j\ldots |\beta|]$, where $j-1$ is the length of the longest common prefix of $\alpha$ and $\beta$, i.e., $\alpha[i]=\beta[i]$ for $1 \le i < j$ and $\alpha[j]\neq \beta[j]$. 

We denote with $U:=\{0, 1, \ldots, u - 1\}$ the integer universe and we assume  that $u$ is a power of two. 
Logarithms are in base 2, so $\log(x)$ indicates $\log_2(x)$. 
In particular, $\log u$ is an integer.
Notation $[n]$ indicates the set $\{1,2,\dots, n\}$. For integers $\ell<r$, we denote with $[\ell,r)$ an interval of integers $\{\ell,\ell+1,\cdots,r-2,r-1\}$ and if $\ell=r$, then $[\ell,r)=\emptyset$ is the empty set. 
We use double curly braces $\lBrace \dots \rBrace$ to represent multisets.
Given a multiset 
$\mathcal{I}$ containing subsets of $U$ and an integer $a\in U$, the \emph{depth} of $\mathcal{I}$ at position $a$ is defined as the number of elements of $\mathcal{I}$ that contain $a$. 
For integer $a,b$ and $p$ with $p\ge 1$, we say $a\equiv_p b$ if and only if $a\modulo p=b\modulo p$.
For a predicate $P$, we use $\ones[P]$ to denote the indicator function that is 1 if $P$ holds and zero otherwise.

A prefix-free encoding of $U$ is a function $\enc: U \rightarrow \{0, 1\}^{+}$ that satisfies that for no two distinct $x, y\in U$, it holds that $\enc(x)$ is a prefix of $\enc(y)$.
For a set $S\subseteq U$, we let $\enc(S) := \{\enc(x):x\in S\}$ be the set of bitstrings obtained from encoding $S$ via $\enc$. We say that $\enc$ is \emph{ordered} if and only if $x<y$ implies $\enc(x) \prec \enc(y)$ for all $x,y\in U$.

We study the following \emph{trie measure}, generalized from the particular cases studied in \cite{GUPTA2007313}:

\begin{definition}\label{def:trie prefix-free}
    Let $\enc: U \rightarrow \{0,1\}^+$ be  a prefix-free encoding and 
    let $S = \{x_1, \ldots, x_m\} \subseteq U$ such that $i<j$ implies $\enc(x_{i}) \prec \enc(x_{j})$ in lexicographic order. We define
    \[
        \trie(\enc(S)) = |\enc(x_{i_1})| + 
        \sum_{j=2}^m |\enc(x_{i_{j-1}}) \oplus \enc(x_{i_j})|.
    \]
\end{definition}

In this article we work, more generally, with sequences of sets. The above definitions generalize naturally:

\begin{definition}\label{def:trie prefix-free sequence}
    Let $\enc: U \rightarrow \{0,1\}^+$ be  a prefix-free encoding and 
    let $\mathcal S = \langle S_1, \dots, S_n \rangle$ be a sequence of subsets of $U$. We define $\enc(\mathcal S)$ to be the sequence $\langle \enc(S_1), \dots, \enc(S_n) \rangle$, and:
    \[
        \trie(\enc(\mathcal S)) =
        \sum_{i=1}^n \trie(\enc(S_i))
    \]
\end{definition}

Throughout the article, we will denote with $N = \sum_{i=1}^n |S_i|$ the total cardinality of the input sets $S_1, \dots, S_n$. Definition \ref{def:trie prefix-free sequence} leads to the central problem explored in this paper, tackled in Sections \ref{section: encoding shifts} and \ref{sec:optimum ordered}: given a sequence $\mathcal S = \langle S_1, \dots, S_n \rangle$ of $n$ subsets of $U$, find the encoding $\enc$ minimizing $\trie(\enc(\mathcal S))$, possibly focusing on particular sub-classes of prefix-free encodings. 

A particularly interesting such sub-class, originally introduced by Gupta et al.~\cite{GUPTA2007313}, is that of \emph{shifted encodings}:

\begin{definition}\label{def:shifted encoding}
Let $S\subseteq U$ and $a\in U$. Notation $S+a$ denotes the application to $S$ of the prefix-free encoding sending each $x\in U$ to $(x+a)\modulo u$, interpreted as a bit-string of $\log u$ bits. Similarly, for a sequence $\mathcal S = \langle S_1, \dots, S_n \rangle$ of subsets of $U$  we denote by $\mathcal S+a$ the sequence $\langle S_1+a, \dots, S_n+a \rangle$.
\end{definition}

We call $\trie(\mathcal S + a)$ the \emph{shifted trie measure}. 
In Section  \ref{section: encoding shifts} we study the problem of finding the value $a\in U$ minimizing $\trie(\mathcal S + a)$, a problem left open by Gupta et al.~\cite{GUPTA2007313}.

\subsection{An Equivalent Problem Formulation}\label{sec:equivalent problem formulation}

Observe that our trie-encoding problem is equivalent to the following natural encoding problem, which we will resort to in Section \ref{sec:optimum ordered}. 
Let $\mathcal S = \langle S_1, \dots, S_n\rangle$ be a sequence of subsets of $U$. For each $x\in U$, denote with $A_x = \{i\in [n]\ :\ x\in S_i\}$ the set collecting all indices $i$ of the subsets $S_i$ containing $x$. Without loss of generality, in this reformulation we assume that $A_x \neq \emptyset$ for all $x\in U$ and $S_i\neq \emptyset$ for all $i\in [n]$ (otherwise, simply re-map integers and ignore empty sets $S_i$). We furthermore do not require $u$ to be a power of two (this will be strictly required only for the optimal shifted encoding in Section \ref{section: encoding shifts}).

For a given prefix-free encoding $\enc$, let $T^{\enc}$ be the binary trie storing the encodings $\enc(0), \dots, \enc(u-1)$ such that, for every $x\in U$, the leaf of $T^{\enc}$ reached by $\enc(x)$ is labeled with $x$. 
For any binary trie $T$ with leaves labeled by integers let moreover:  $r(T)$ be the root of $T$, $T_v$ denote the subtree of $T$ rooted at node $v$, and $L(T_v)$ be the set of (integers labeling the) leaves of subtree $T_v$ (i.e.\ the leaves below node $v$). 
We overload notation and identify with $T$ also the set of $T$'s nodes (the use will be always clear by the context).
Observe that the definition of the sets $A_x$ allows us to reformulate the measure $\trie(\enc(\mathcal S))$ as follows:
\[
    \trie(\enc(\mathcal S)) 
    = \sum_{i\in [n]} \trie(\enc(S_i))
    = \sum_{v\in T^{\enc}\setminus \{r(T^{\enc})\}} \Big| \bigcup_{x \in L(T^{\enc}_v)} A_x \Big|.
\]

In other words, the cost of node $v$ is equal to the cardinality of the union of the sets $A_{x_1}, \dots, A_{x_t}$ corresponding to the leaves $x_1,\dots,x_t$ below $v$. The overall cost of the tree is the sum of the costs of its nodes, excluding the root. 
To see why this formulation is equivalent to the previous one observe that, among the $n$ tries for $\enc(S_1), \dots, \enc(S_n)$, 
(a copy of) the incoming edge of $v$ is present in the trie for $\enc(S_i)$ for all 
$i \in \bigcup_{x \in L(T_v)} A_x$. See  Figure \ref{fig:optimum ordered}.

\begin{figure}[h!]
\centering
\begin{minipage}{.3\textwidth}
  \centering
\begin{tikzpicture}[level distance=1cm,sibling distance=.5cm,
   edge from parent path={(\tikzparentnode) -- (\tikzchildnode)}, scale=.7, transform shape]
\Tree
[
    \edge  node[auto=right,pos=.5] {$0$}; [.0 ]
    \edge node[auto=left,pos=.5] {$1$};
    [ 
        \edge node[auto=right,pos=.5] {$0$}; [.1 ]
        \edge [red] node[auto=left,pos=.5] {$1$};
        [
            \edge node[auto=right,pos=.7] {$0$}; [.2 ]
            \edge node[auto=left,pos=.7] {$1$}; [.3 ]
        ]
    ]
]
\end{tikzpicture}
\centering
\footnotesize $\begin{array}{c}U = \{0,1,2,3\}\end{array}$
\end{minipage}
\hfill\vline\hfill
\begin{minipage}{.2\textwidth}
\begin{tikzpicture}[level distance=1cm,sibling distance=.5cm,
   edge from parent path={(\tikzparentnode) -- (\tikzchildnode)}, scale=.7, transform shape]
\Tree
[
    \edge [lightgray] node[auto=right,pos=.5] {$0$}; [.\textcolor{lightgray}0 ]
    \edge node[auto=left,pos=.5] {$1$};
    [ 
        \edge node[auto=right,pos=.5] {$0$}; [.1 ]
        \edge [red] node[auto=left,pos=.5] {$1$};
        [
            \edge node[auto=right,pos=.7] {$0$}; [.2 ]
            \edge [lightgray] node[auto=left,pos=.7] {$1$}; [.\textcolor{lightgray}3 ]
        ]
    ]
]
\end{tikzpicture}
\centering
\footnotesize $\begin{array}{c}S_1 = \{1,2\}\end{array}$
\end{minipage}
\begin{minipage}{.2\textwidth}
\begin{tikzpicture}[level distance=1cm,sibling distance=.5cm,
   edge from parent path={(\tikzparentnode) -- (\tikzchildnode)}, scale=.7, transform shape]
\Tree
[
    \edge node[auto=right,pos=.5] {$0$}; [.0 ]
    \edge node[auto=left,pos=.5] {$1$};
    [ 
        \edge node[auto=right,pos=.5] {$0$}; [.1 ]
        \edge [lightgray] node[auto=left,pos=.5] {$1$};
        [
            \edge [lightgray] node[auto=right,pos=.7] {$0$}; [.\textcolor{lightgray}2 ]
            \edge [lightgray] node[auto=left,pos=.7] {$1$}; [.\textcolor{lightgray}3 ]
        ]
    ]
]
\end{tikzpicture}
\centering
\footnotesize $\begin{array}{c}S_2 = \{0,1\}\end{array}$
\end{minipage}
\begin{minipage}{.2\textwidth}
\begin{tikzpicture}[level distance=1cm,sibling distance=.5cm,
   edge from parent path={(\tikzparentnode) -- (\tikzchildnode)}, scale=.7, transform shape]
\Tree
[
    \edge [lightgray] node[auto=right,pos=.5] {$0$}; [.\textcolor{lightgray}0 ]
    \edge node[auto=left,pos=.5] {$1$};
    [ 
        \edge node[auto=right,pos=.5] {$0$}; [.1 ]
        \edge [red] node[auto=left,pos=.5] {$1$};
        [
            \edge node[auto=right,pos=.7] {$0$}; [.2 ]
            \edge node[auto=left,pos=.7] {$1$}; [.3 ]
        ]
    ]
];
\end{tikzpicture}
\centering
\footnotesize $\begin{array}{c}S_3 = \{1,2,3\}\end{array}$
\end{minipage}
\caption{\footnotesize \textbf{Left.} An ordered prefix-free encoding $\enc$ of the universe $U = \{0,1,2,3\}$, represented as a binary trie $T^{\enc}$. 
This encoding is used (right part) to encode sets $S_1 = \{1,2\}, S_2 = \{0,1\}, S_3 = \{1,2,3\}$.
The corresponding sets $A_x$ are: $A_0 = \{2\}, A_1 = \{1,2,3\}, A_2 = \{1,3\}, A_3 = \{3\}$.
Highlighted in red, an edge leading to a node $v$ with $\cup_{x\in T^{\enc}_v} A_x = A_2 \cup A_3 = \{1,3\}$, meaning that the tries for (the encoded) $S_1$ and $S_3$ will contain a copy of the same edge.
\textbf{Right.} Using the prefix-free encoding $\enc$ to encode $S_1,S_2,S_3$ by packing their codes into three tries (gray edges do not belong to the tries and are shown only for completeness). The tries for $S_1,S_2,S_3$ contain in total 12 edges, so $\trie(\enc(\langle S_1, S_2, S_3\rangle)) = 12$. 
In red: the two copies of the red edge on the left part of the figure, highlighting the equivalence of the two formulations of our trie-encoding problem. As a matter of fact, this is an optimal ordered code.} 
\label{fig:optimum ordered}
\end{figure}
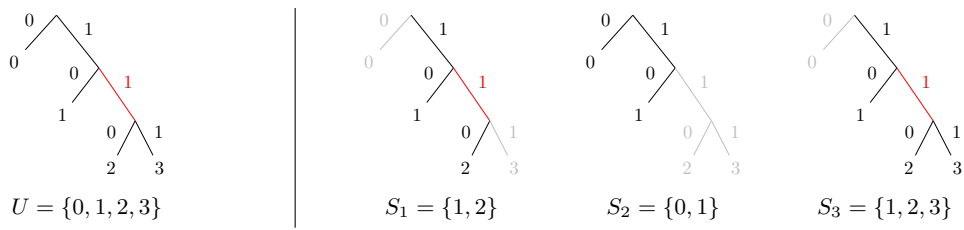

As there is a one-to-one relation between the set of all prefix-free binary encodings of the integers $U$ and the set of all binary trees with leaves $U$, we can conclude that the problem of finding a prefix-free encoding $\enc$ that minimizes $\trie(\enc(\mathcal S))$ can equivalently be seen as the problem of finding a binary tree $T$ with $u$ leaves (labeled with the universe elements $0,\dots, u-1$, not necessarily in this order) that minimizes the cost function $c(T) := \sum_{v\in T\setminus \{r(T)\}} | \bigcup_{x \in L(T_v)} A_x|$. 
We note that this problem is similar to the standard problems of (i) finding optimal binary search trees on a set of keys with given frequencies and (ii) finding the optimal prefix-free encoding for a source of symbols with given frequencies. As a matter of fact, if the sets $A_0, \ldots, A_{u - 1}$ are disjoint then our problem is equivalent to (i-ii) and Huffman's algorithm finds the optimal solution. 

Observe that this reformulation of the problem is equivalent, in turn, to the following optimization problem: given $u$ sets $A_0, \dots, A_{u-1}$, find the minimum-cost sequence of (binary) set unions that merges all sets into $\cup_{i=0}^{u-1} A_i$, where the cost of merging sets $A$ and $A'$ is $|A|+|A'|$. Ghosh et al.\ in \cite{Ghosh2015} proved this problem to be NP-hard, and provided tight approximations. This motivates us to study less general families of prefix-free encodings, which hopefully can be optimized more efficiently.

\section{Optimal Shifted Encoding}\label{section: encoding shifts}

Given a sequence $\mathcal{S}=\langle S_1,\cdots,S_n\rangle$ of $n$ subsets of $U$, in this section we study the problem of finding an optimal shift $a\in U$ that minimizes $\trie(\mathcal{S}+a)$. 
After finding a useful reformulation of $\trie(\mathcal{S}+a)$,
we describe an algorithm for finding an optimal shift $a$. 
The algorithm is parameterized on an abstract data structure for integer sequences. 
Using a simple array, we obtain an $O(u+N\log u)$-time algorithm. 
A DAG-compressed segment tree, on the other hand, gives an $O(N\log^2u)$-time algorithm.

\subsection{Trie measure as a function of the shift \texorpdfstring{$a$}{a}}
\label{subsection: trie in terms of shift}

Let $S=\{x_1,x_2,\dots,x_m\}\subseteq U$ be a non-empty integer set with $0\le x_1<x_2<\cdots<x_m < u$. We assume $\enc(x)$ is the standard $(\log u)$-bit binary representation of $x$ throughout this section.
Observe that for every non-negative integers $0\le x<y < u$, it holds that
\begin{equation}
    |\enc(x)\oplus \enc(y)|=\max_{j\in [x, y)}|\enc(j)\oplus\enc(j+1)|.
\label{eq: x oplus y as max of j oplus j+1}
\end{equation}

Consider the (infinite) sequence $\langle c_j\rangle_{j\ge 0}$ with $c_j=|\enc(j\modulo u)\oplus\enc((j+1)\modulo u)|$. 
This sequence is the infinite copy of the first $u/2$ elements  (see the first row in Figure~\ref{fig:organ}) of the sequence known as ``ruler function'' (OEIS sequence A001511\footnote{\url{https://oeis.org/A001511}}).
This sequence can be decomposed into the sum of $\log u$ periodic binary sequences. For integers $k\in[\log u]$ and $j\ge 0$, let us define $c^{(k)}_j$ as follows.
\begin{equation*}
    c^{(k)}_j = 
    \begin{cases}
        1 & \mbox{ if $j+1$ is a multiple of $2^{k-1}$,}\\
        0 & \mbox{ otherwise.}
    \end{cases}
\end{equation*}
Then, for every $j\ge 0$, it holds that $
c_j = \sum_{k=1}^{\log u} c^{(k)}_j$. 
Together with~\eqref{eq: x oplus y as max of j oplus j+1}, we obtain
\begin{equation}
    |\enc(x)\oplus \enc(y)|=\max_{j\in [x, y)}\sum_{k=1}^{\log u} c^{(k)}_j=\sum_{k=1}^{\log u} \max_{j\in [x, y)}c^{(k)}_j,
    \label{eq: trie x y = summax}
\end{equation}
\begin{figure}
    \centering
    \begin{tikzpicture}[
x=0.5cm,y=-0.4cm,
c/.style={circle, inner sep=0pt, minimum size=3pt},
b/.style={draw=black,fill=black},
b1/.style={draw=black,line width=0.4mm},
b2/.style={draw=hanblue,line width=0.4mm},
b3/.style={draw=teal,line width=0.4mm},
b4/.style={draw=magenta,line width=0.4mm},
g/.style={draw=lightgray, fill=lightgray},
pink1/.style={line width=0.25mm,fill=cyan},
pink2/.style={line width=0.25mm,fill=grannysmithapple},
pink3/.style={line width=0.25mm,fill=pink},
pink4/.style={line width=0.25mm,fill=lightgray},
]
\tikzstyle{every node}=[font=\small]
\clip (-2,-0.5) rectangle (19.5,9.5);

\node [c,b] (v00) at ( 0,0) {};

\node [c,b] (v10) at ( 0,1) {};
\node [c,b] (v11) at ( 8,1) {};

\node [c,b] (v20) at ( 0,2) {};
\node [c,b] (v21) at ( 4,2) {};
\node [c,b] (v22) at ( 8,2) {};
\node [c,b] (v23) at (12,2) {};

\node [c,g] (v30) at ( 0,3) {};
\node [c,b] (v31) at ( 2,3) {};
\node [c,b] (v32) at ( 4,3) {};
\node [c,g] (v33) at ( 6,3) {};
\node [c,g] (v34) at ( 8,3) {};
\node [c,b] (v35) at (10,3) {};
\node [c,b] (v36) at (12,3) {};
\node [c,g] (v37) at (14,3) {};

\node [c,g] (v40) at ( 0,4) {};
\node [c,g] (v41) at ( 1,4) {};
\node [c,b] (v42) at ( 2,4) {};
\node [c,g] (v43) at ( 3,4) {};
\node [c,b] (v44) at ( 4,4) {};
\node [c,g] (v45) at ( 5,4) {};
\node [c,g] (v46) at ( 6,4) {};
\node [c,g] (v47) at ( 7,4) {};
\node [c,g] (v48) at ( 8,4) {};
\node [c,g] (v49) at ( 9,4) {};
\node [c,b] (v4a) at (10,4) {};
\node [c,g] (v4b) at (11,4) {};
\node [c,g] (v4c) at (12,4) {};
\node [c,b] (v4d) at (13,4) {};
\node [c,g] (v4e) at (14,4) {};
\node [c,g] (v4f) at (15,4) {};

\draw [b1] (v00) -- (v10);
\draw [b3] (v00) -- (v11);

\draw [b1] (v10) -- (v20);
\draw [b2] (v10) -- (v21);
\draw [b3] (v11) -- (v22);
\draw [b4] (v11) -- (v23);

\draw [g] (v20) -- (v30);
\draw [b1] (v20) -- (v31);
\draw [b2] (v21) -- (v32);
\draw [g] (v21) -- (v33);
\draw [g] (v22) -- (v34);
\draw [b3] (v22) -- (v35);
\draw [b4] (v23) -- (v36);
\draw [g] (v23) -- (v37);

\draw [g] (v30) -- (v40);
\draw [g] (v30) -- (v41);
\draw [b1] (v31) -- (v42);
\draw [g] (v31) -- (v43);
\draw [b2] (v32) -- (v44);
\draw [g] (v32) -- (v45);
\draw [g] (v33) -- (v46);
\draw [g] (v33) -- (v47);
\draw [g] (v34) -- (v48);
\draw [g] (v34) -- (v49);
\draw [b3] (v35) -- (v4a);
\draw [g] (v35) -- (v4b);
\draw [g] (v36) -- (v4c);
\draw [b4] (v36) -- (v4d);
\draw [g] (v37) -- (v4e);
\draw [g] (v37) -- (v4f);

\node [c,g] (u00) at (16,0) {};
\node [c,g] (u10) at (16,1) {};
\node [c,g] (u11) at (24,1) {};
\node [c,g] (u20) at (16,2) {};
\node [c,g] (u21) at (20,2) {};
\node [c,g] (u30) at (16,3) {};
\node [c,g] (u31) at (18,3) {};
\node [c,g] (u40) at (16,4) {};
\node [c,g] (u41) at (17,4) {};
\node [c,g] (u42) at (18,4) {};
\node [c,g] (u43) at (19,4) {};
\draw [g] (u00) -- (u10);
\draw [g] (u00) -- (u11);
\draw [g] (u10) -- (u20);
\draw [g] (u10) -- (u21);
\draw [g] (u20) -- (u30);
\draw [g] (u20) -- (u31);
\draw [g] (u30) -- (u40);
\draw [g] (u30) -- (u41);
\draw [g] (u31) -- (u42);
\draw [g] (u31) -- (u43);

\draw [    ] ( 2-0.35,6-0.35) rectangle ++ (2-0.3,0.7);
\draw [pink2] ( 4-0.35,6-0.35) rectangle ++ (6-0.3,0.7);
\draw [    ] (10-0.35,6-0.35) rectangle ++ (3-0.3,0.7);
\draw [pink4] (13-0.35,6-0.35) rectangle ++ (5-0.3,0.7);

\draw [pink1] ( 2-0.35,7-0.35) rectangle ++ (2-0.3,0.7);
\draw [pink2] ( 4-0.35,7-0.35) rectangle ++ (6-0.3,0.7);
\draw [pink3] (10-0.35,7-0.35) rectangle ++ (3-0.3,0.7);
\draw [pink4] (13-0.35,7-0.35) rectangle ++ (5-0.3,0.7);

\draw [pink1] ( 2-0.35,8-0.35) rectangle ++ (2-0.3,0.7);
\draw [pink2] ( 4-0.35,8-0.35) rectangle ++ (6-0.3,0.7);
\draw [pink3] (10-0.35,8-0.35) rectangle ++ (3-0.3,0.7);
\draw [pink4] (13-0.35,8-0.35) rectangle ++ (5-0.3,0.7);

\draw [pink1] ( 2-0.35,9-0.35) rectangle ++ (2-0.3,0.7);
\draw [pink2] ( 4-0.35,9-0.35) rectangle ++ (6-0.3,0.7);
\draw [pink3] (10-0.35,9-0.35) rectangle ++ (3-0.3,0.7);
\draw [pink4] (13-0.35,9-0.35) rectangle ++ (5-0.3,0.7);

\node at (-1,5) {$c$};
\node at ( 0,5) {1};
\node at ( 1,5) {2};
\node at ( 2,5) {1};
\node at ( 3,5) {3};
\node at ( 4,5) {1};
\node at ( 5,5) {2};
\node at ( 6,5) {1};
\node at ( 7,5) {4};
\node at ( 8,5) {1};
\node at ( 9,5) {2};
\node at (10,5) {1};
\node at (11,5) {3};
\node at (12,5) {1};
\node at (13,5) {2};
\node at (14,5) {1}; 
\node at (15,5) {4}; 
\node at (16,5) {1}; 
\node at (17,5) {2}; 
\node at (18,5) {1}; 
\node at (19,5) {3}; 

\node at (-1,9) {$c^{(1)}$};
\node at ( 0,9) {1};
\node at ( 1,9) {1};
\node at ( 2,9) {1};
\node at ( 3,9) {1};
\node at ( 4,9) {1};
\node at ( 5,9) {1};
\node at ( 6,9) {1};
\node at ( 7,9) {1};
\node at ( 8,9) {1};
\node at ( 9,9) {1};
\node at (10,9) {1};
\node at (11,9) {1};
\node at (12,9) {1};
\node at (13,9) {1};
\node at (14,9) {1}; 
\node at (15,9) {1}; 
\node at (16,9) {1}; 
\node at (17,9) {1}; 
\node at (18,9) {1}; 
\node at (19,9) {1}; 

\node at (-1,8) {$c^{(2)}$};
\node at ( 0,8) {0};
\node at ( 1,8) {1};
\node at ( 2,8) {0};
\node at ( 3,8) {1};
\node at ( 4,8) {0};
\node at ( 5,8) {1};
\node at ( 6,8) {0};
\node at ( 7,8) {1};
\node at ( 8,8) {0};
\node at ( 9,8) {1};
\node at (10,8) {0};
\node at (11,8) {1};
\node at (12,8) {0};
\node at (13,8) {1};
\node at (14,8) {0}; 
\node at (15,8) {1}; 
\node at (16,8) {0}; 
\node at (17,8) {1}; 
\node at (18,8) {0}; 
\node at (19,8) {1}; 

\node at (-1,7) {$c^{(3)}$};
\node at ( 0,7) {0};
\node at ( 1,7) {0};
\node at ( 2,7) {0};
\node at ( 3,7) {1};
\node at ( 4,7) {0};
\node at ( 5,7) {0};
\node at ( 6,7) {0};
\node at ( 7,7) {1};
\node at ( 8,7) {0};
\node at ( 9,7) {0};
\node at (10,7) {0};
\node at (11,7) {1};
\node at (12,7) {0};
\node at (13,7) {0};
\node at (14,7) {0}; 
\node at (15,7) {1}; 
\node at (16,7) {0}; 
\node at (17,7) {0}; 
\node at (18,7) {0}; 
\node at (19,7) {1}; 

\node at (-1,6) {$c^{(4)}$};
\node at ( 0,6) {0};
\node at ( 1,6) {0};
\node at ( 2,6) {0};
\node at ( 3,6) {0};
\node at ( 4,6) {0};
\node at ( 5,6) {0};
\node at ( 6,6) {0};
\node at ( 7,6) {1};
\node at ( 8,6) {0};
\node at ( 9,6) {0};
\node at (10,6) {0};
\node at (11,6) {0};
\node at (12,6) {0};
\node at (13,6) {0};
\node at (14,6) {0}; 
\node at (15,6) {1}; 
\node at (16,6) {0}; 
\node at (17,6) {0}; 
\node at (18,6) {0}; 
\node at (19,6) {0}; 

\end{tikzpicture}
    \caption{The nodes in the trie of $S=\{x_1, x_2, x_3, x_4\} = \{2,4,10,13\}$ in universe $u = 16$. The $i$-th box on each row spans range $[x_{i}, x_{i+1})$, with $x_5 := x_1 + u$. The number of edges in the trie of the set is equal to the number of shaded boxes that contain at least one 1-bit. The shaded boxes correspond to the edges with the same color.
    }
    \label{fig:organ}
\end{figure}
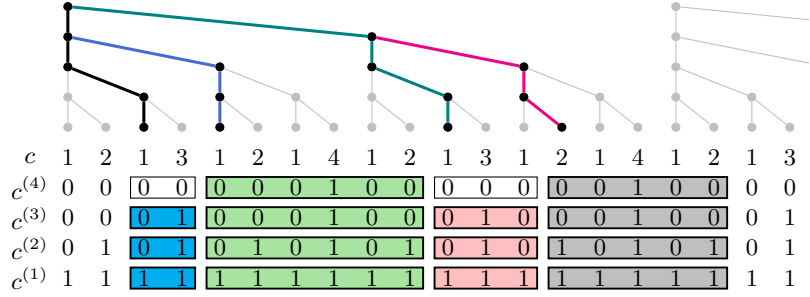%
where we used that the maximum and the sum are interchangeable since, for every $k>1$, if $c^{(k)}_j=1$ then $c^{(k-1)}_j=1$ by definition.

Recalling that the value of $c^{(k)}_j$ is either 0 or 1, this representation allows us to represent the cost (i.e.\ the number of edges) of a trie for $S=\{x_1,\cdots,x_m\}$ as follows. Consider the $\log u$ binary sequences $(c^{(k)}_{j})_{j\ge 0}$ for $k\in[\log u]$. For each sequence, and for every $1\le i\le m$, consider a range 
$[x_i,x_{i+1})$ of indices of the sequence $(c^{(k)}_j)_{j\ge 0}$ where we define $x_{m+1}=x_1+u$. 
For each $x_{i+1}$ with $i\in[m-1]$, observe that adding $x_{i+1}$ to the trie containing the integers $\{x_1,x_2,\cdots,x_{i}\}$ creates a new edge at level $k$ if and only if $\max_{j\in [x_i,x_{i+1})}c^{(k)}_j=1$.  Notice also that $x_1$ always creates $\log u$ edges forming the left-most path of the trie; at the same time, it always holds that $\max_{j\in [x_m,x_{m+1})}c^{(k)}_j=1$ for all $k\in [\log u]$ because $c^{(k)}_{u-1}=1$ for every $k\in[\log u]$. See Figure~\ref{fig:organ} for an example. 
More generally, the cost of a trie shifted by $a$ with $a\ge 0$ can be represented as in the following lemma whose proof is deferred to Appendix~\ref{app: deferred lemma trie cj proof}.

\begin{restatable}{lemma}{lemmatrieeqsumofc}
    \label{lemma: trie S+a in terms of c^k_j+a}
    Let $S=\{x_1,\cdots,x_m\}$ be a set of $m$ integers with $0\le x_1 <\cdots<x_m<u$. Let us define $x_{m+1}:=x_1+u$. For every $a\in U$, it holds that
    \[
        \trie(S+a)=\sum_{k=1}^{\log u}\sum_{i=1}^m \max_{j\in [x_{i} + a, x_{i+1} + a)} c^{(k)}_{j}
    \]
\end{restatable}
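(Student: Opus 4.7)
The plan is to expand $\trie(S+a)$ via Definition~\ref{def:trie prefix-free}, rewrite each consecutive-pair term using \eqref{eq: trie x y = summax}, and then match the result to the claimed form via the $u$-periodicity of the sequences $(c^{(k)}_j)_{j\ge 0}$.

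First, I would sort the shifted values $y_i := (x_i + a) \modulo u$ in increasing order as $z_1 < \dots < z_m$. Letting $\ell \in \{0,1,\dots,m\}$ be the number of indices $i$ with $x_i+a<u$, in the generic case $1\le \ell<m$ the sorted list reads $z_1,\dots,z_m = x_{\ell+1}+a-u,\dots,x_m+a-u,\,x_1+a,\dots,x_\ell+a$; the corner cases $\ell\in\{0,m\}$ are analogous and simpler. Since $|\enc(z_1)|=\log u$, Definition~\ref{def:trie prefix-free} gives $\trie(S+a)=\log u+\sum_{j=2}^m|\enc(z_{j-1})\oplus\enc(z_j)|$. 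Applying~\eqref{eq: trie x y = summax} to each term and writing $\log u=\sum_{k=1}^{\log u}1$ then yields
\[
\trie(S+a) \;=\; \sum_{k=1}^{\log u}\biggl(1+\sum_{j=2}^m \max_{j'\in[z_{j-1},z_j)} c^{(k)}_{j'}\biggr).
\]

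The key step is then to identify the $m$ expressions inside the outer sum with the $m$ lemma ranges $[x_i+a,x_{i+1}+a)$. Two elementary facts suffice. First, since $2^{k-1}$ divides $u$ for every $k\in[\log u]$, each sequence $(c^{(k)}_j)_j$ is $u$-periodic, so for every index $i$ other than a single distinguished $i^*$, the range $[x_i+a,x_{i+1}+a)$ either already equals some $[z_{j-1},z_j)$ or, after subtracting $u$ from both endpoints, does; the corresponding maxima agree, accounting for the $m-1$ terms in the inner sum. Second, $c^{(k)}_{ru-1}=1$ for every $k\in[\log u]$ and every integer $r\ge 1$ (since $2^{k-1}\mid u\mid ru$): the remaining ``straddling'' range $[x_{i^*}+a,x_{i^*+1}+a)$ --- given by $i^*=\ell$ when $1\le\ell<m$ and by $i^*=m$ in the two corner cases --- contains a point of the form $ru-1$ for some $r\in\{1,2\}$, so its maximum equals $1$ and absorbs the outer $+1$.

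The only real subtlety is the bookkeeping around $\ell$: one has to verify that in each regime (all wrap, none wrap, or some wrap) exactly one lemma range contains such a point $ru-1$, while the remaining $m-1$ ranges biject --- via addition or subtraction of $u$ --- with the $m-1$ intervals $[z_{j-1},z_j)$ that parametrize the shifted trie's consecutive-pair contributions. Summing the per-$k$ identities then recovers the right-hand side of the lemma, and no further computation is required.
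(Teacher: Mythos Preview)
Your proposal is correct and follows essentially the same approach as the paper's proof: both expand $\trie(S+a)$ via Definition~\ref{def:trie prefix-free} and \eqref{eq: trie x y = summax}, split into the three regimes (no wrap, all wrap, some wrap --- your $\ell=m$, $\ell=0$, $1\le\ell<m$ correspond to the paper's cases $S_{\ge}=\emptyset$, $S_{<}=\emptyset$, both nonempty), use the $u$-periodicity of $c^{(k)}_j$ to match $m-1$ of the lemma ranges with the $m-1$ consecutive-pair intervals, and absorb the leftover $\log u$ into the single ``straddling'' range via $c^{(k)}_{ru-1}=1$. The only cosmetic difference is that you phrase the case split through the sorting index $\ell$ while the paper phrases it through the partition $S_{<},S_{\ge}$.
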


Lemma~\ref{lemma: trie S+a in terms of c^k_j+a} allows us to compute $\trie(S+a)$ by summing over $k\in[\log u]$ the number of the shifted ranges $[x_{i}+a,x_{i+1}+a)$ (for $i\in[m]$) such that $\max_{j\in[x_{i}+a,x_{i+1}+a)}c^{(k)}_j=1$; see Figure~\ref{fig: trie boxes} for an example. 
The following lemmata give an illustration of which values of $a$ involve a cost of 1 with a range $[x_{i},x_{i+1})$.

\begin{figure}
    \centering
    \resizebox{0.8\textwidth}{!}{\begin{tikzpicture}[
x=0.5cm,y=-0.4cm,
r/.style={fill=pink},
rc/.style={circle,draw=red, fill=red, inner sep=0pt, minimum size=4pt},
rs/.style={rectangle,draw=red, inner sep=0pt, minimum size=6pt},
bc/.style={circle,draw=blue, inner sep=0pt, minimum size=4pt},
kc/.style={circle,draw=black, fill=black, inner sep=0pt, minimum size=3pt},
]

\def\x{2};
\def\y{4};
\def\k{0};\draw [r] ( \x,\k-0.4) rectangle (\y-0.1,\k+0.4);
\def\k{1};\draw [r] ( \x,\k-0.4) rectangle (\y-0.1,\k+0.4);
\def\k{2};\draw [ ] ( \x,\k-0.4) rectangle (\y-0.1,\k+0.4);
\def\k{3};\draw [ ] ( \x,\k-0.4) rectangle (\y-0.1,\k+0.4);
\def\k{4};\draw [r] ( \x,\k-0.4) rectangle (\y-0.1,\k+0.4);

\def\x{4};
\def\y{10};
\def\k{0};\draw [r] ( \x,\k-0.4) rectangle (\y-0.1,\k+0.4);
\def\k{1};\draw [r] ( \x,\k-0.4) rectangle (\y-0.1,\k+0.4);
\def\k{2};\draw [r] ( \x,\k-0.4) rectangle (\y-0.1,\k+0.4);
\def\k{3};\draw [r] ( \x,\k-0.4) rectangle (\y-0.1,\k+0.4);
\def\k{4};\draw [r] ( \x,\k-0.4) rectangle (\y-0.1,\k+0.4);

\def\x{10};
\def\y{13};
\def\k{0};\draw [r] ( \x,\k-0.4) rectangle (\y-0.1,\k+0.4);
\def\k{1};\draw [r] ( \x,\k-0.4) rectangle (\y-0.1,\k+0.4);
\def\k{2};\draw [ ] ( \x,\k-0.4) rectangle (\y-0.1,\k+0.4);
\def\k{3};\draw [r] ( \x,\k-0.4) rectangle (\y-0.1,\k+0.4);
\def\k{4};\draw [r] ( \x,\k-0.4) rectangle (\y-0.1,\k+0.4);

\def\x{13};
\def\y{18};
\def\k{0};\draw [r] ( \x,\k-0.4) rectangle (\y-0.1,\k+0.4);
\def\k{1};\draw [r] ( \x,\k-0.4) rectangle (\y-0.1,\k+0.4);
\def\k{2};\draw [r] ( \x,\k-0.4) rectangle (\y-0.1,\k+0.4);
\def\k{3};\draw [r] ( \x,\k-0.4) rectangle (\y-0.1,\k+0.4);
\def\k{4};\draw [r] ( \x,\k-0.4) rectangle (\y-0.1,\k+0.4);

\node at ( 2+0.5,-2) {$x_1$};
\node at ( 4+0.5,-2) {$x_2$};
\node at (10+0.5,-2) {$x_3$};
\node at (13+0.5,-2) {$x_4$};
\node at (18+0.5,-2) {$x_5$};
\node [anchor=west] at (19,-2) {$(=x_1+u)$};

\node at (-1.0,-1) {$j$};
 \foreach \x in {0,...,18}{
        \node at (\x+0.5,-1) {\x};
    }
\node at (19.7,-1) {$\cdots$};
\draw (-3,-0.5)--(20.5,-0.5);

\foreach \y in {0,...,4} 
{
    \node at (-1.5,\y) {$a=\y:$};
    \foreach \x in {0,...,18}{
        \node at (\x+0.5,\y) {
            \pgfmathparse{int(mod(\x+\y+1,4))}
            \ifthenelse{\pgfmathresult=0}{1}{0}
        };
    }
    \node at (19.7,\y) {$\cdots$};
}

\draw [dotted] (-3,3.5)--(20.5,3.5);
\draw [<->,blue] (20.5,-0.5)--(20.5,3.5);
\node[label={[label distance=0cm,rotate=-90,blue] left: period=$2^{k-1}$}] at (21.4,4.5) {};

\node[label={[label distance=0cm,rotate=-90] left: $\cdots$}] at (-1,6) {};

\node[label={[label distance=0cm,rotate=-90] left: $\cdots$}] at (8.3,6) {};

\end{tikzpicture}}
    \caption{Representation of $\trie(S+a)$ based on $c^{(k)}_j$ for $S=\{2,4,10,13\}$ at level $k=3$ and $u=16$. Each row represents $c^{(3)}_{j+a}$,
    for $a\in U$. Boxes indicate ranges $[x_i,x_{i+1})$ and red boxes contain at least one 1. As an example, consider the pair $x_1, x_2$ (leftmost boxes). Among the shifts $0,\ldots,4$, the only shifts for which we do not pay the cost of $k=3$ for this pair are $a=2$ and $a=3$; i.e., $x_1+a$ and $x_2+a$ share an edge at level $3$ for $a=2,3$ in $\trie(S + a)$. Hence, in the figure we have two non-red boxes in the rows corresponding to $a=2,3$.}
    \label{fig: trie boxes}
\end{figure}

\begin{lemma} \label{lemma: always costs}
     Let $k\in [\log u]$, and $x,y\in [0,2u)$ with $x+2^{k-1}\le y$. Then it holds for every $a\in U$ that $\max_{j\in[x+a,y+a)}c^{(k)}_j=1$.
\end{lemma}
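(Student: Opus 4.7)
The plan is to unwind the definition of $c^{(k)}_j$ and then apply a simple pigeonhole argument on the length of the interval.

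Recall that $c^{(k)}_j = 1$ iff $j+1$ is a multiple of $2^{k-1}$, equivalently $j \equiv -1 \pmod{2^{k-1}}$. So the claim $\max_{j\in [x+a,y+a)} c^{(k)}_j = 1$ reduces to showing the existence of at least one integer $j \in [x+a, y+a)$ with $j \equiv 2^{k-1}-1 \pmod{2^{k-1}}$.

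First I would observe that the interval $[x+a, y+a)$ has length exactly $y-x$, and by the hypothesis $x + 2^{k-1} \le y$ this length is at least $2^{k-1}$. Then I would invoke the standard fact that any window of $2^{k-1}$ consecutive integers hits every residue class modulo $2^{k-1}$ exactly once; hence in particular it contains some $j \equiv 2^{k-1}-1 \pmod{2^{k-1}}$. By the reformulation above, this $j$ satisfies $c^{(k)}_j = 1$, establishing the lemma.

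There is really no main obstacle: once one rewrites the condition $c^{(k)}_j = 1$ as a congruence and notes the length bound, the claim is immediate. The only very minor point to double-check is that the hypotheses $x, y \in [0, 2u)$ and $a \in U$ are irrelevant to the argument since $c^{(k)}$ is defined on all non-negative integers as a purely periodic function; these range constraints merely ensure that the lemma is stated within the regime that appears when applying Lemma~\ref{lemma: trie S+a in terms of c^k_j+a}. Thus no boundary case (e.g.\ the wrap-around via $x_{m+1}=x_1+u$) creates any additional difficulty.
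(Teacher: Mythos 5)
Your proof is correct and takes essentially the same approach as the paper, which simply observes that any interval of length at least $2^{k-1}$ contains a multiple of $2^{k-1}$; you have merely spelled out the same pigeonhole argument in more detail.
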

\begin{proof}
    Immediate from that any interval of length $2^{k-1}$ contains a multiple of $2^{k-1}$.
\end{proof}

\begin{lemma} \label{lemma: cost interval}
    For $a\in U$, $k\in [\log u]$, and $x,y\in [0,2u)$ with $x <y$, it holds that $\max_{j\in[x+a,y+a)}c^{(k)}_j=1$ if and only if $a \equiv_{2^{k-1}} b$ for some $b\in [2u - y, 2u - x)$. 
\end{lemma}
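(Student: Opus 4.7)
The plan is to unfold both sides of the biconditional into the same set-membership condition on $a$, leveraging the fact that $p := 2^{k-1}$ divides $2u$. This divisibility holds because $k \le \log u$ forces $p \le u/2$, so $p$ divides $u$ and hence $2u$.

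First, for the left-hand condition, I would start from the definition of $c^{(k)}_j$, which gives $c^{(k)}_j = 1$ if and only if $p$ divides $j+1$. Hence $\max_{j \in [x+a, y+a)} c^{(k)}_j = 1$ exactly when $[x+a, y+a)$ contains some integer $j$ with $j+1$ a multiple of $p$. Writing $j+1 = mp$, this is equivalent to: there exists an integer $m$ with $mp \in [x+a+1, y+a+1)$, which after rearrangement becomes $a \in [mp - y, mp - x)$ for some integer $m$. The condition $x+a+1 \ge 1$ automatically forces $m \ge 1$, so no sign restriction on $m$ needs to be imposed externally.

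Next, for the right-hand condition, I would rewrite $a \equiv_p b$ with $b \in [2u-y, 2u-x)$ as the existence of integers $j$ and $b$ with $b = a + jp$ and $b \in [2u-y, 2u-x)$, equivalently $jp \in [2u - y - a, 2u - x - a)$. Setting $M := 2u/p$ (an integer, by the divisibility noted above) and substituting $m := M - j$ yields $-mp \in [-y-a, -x-a)$, which rearranges to $mp \in (x+a, y+a]$, i.e., $a \in [mp - y, mp - x)$ for some integer $m$.

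Since both sides reduce to the identical statement, the equivalence follows. I expect the only obstacle to be purely notational: keeping the substitutions coherent and correctly invoking $p \mid 2u$, which is precisely what makes the shift by $2u$ vanish modulo $p$ on the right-hand side. Once this is laid out, the biconditional is essentially a one-line arithmetic identity matching the two set-membership descriptions.
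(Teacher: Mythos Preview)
Your proposal is correct and follows essentially the same route as the paper: both arguments rewrite the left-hand side as the existence of an integer $m$ with $a\in[mp-y,\,mp-x)$ where $p=2^{k-1}$, and then use that $p\mid 2u$ to identify this with the right-hand side. The only cosmetic difference is that the paper chains the equivalences forward (ending with a shift by $2u$), whereas you reduce both sides separately to the common intermediate form.
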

\begin{proof}
    Recall that $\max\{c^{(k)}_j:j\in[x+a,y+a)\}=1$ if and only if there exists $j \in [x+a, y+a)$ such that $j+1$ is a multiple of $2^{k-1}$ by definition of $c^{(k)}_j$.
    Assume that $j\in [x+a, y+a)$ is such that $j + 1$ is a multiple of $2^{k-1}$. Equivalently, $j=t\cdot 2^{k-1}-1$ for some integer $t$ and $x + a < t \cdot 2^{k-1} \le y + a$. The latter is equivalent to $a \in [t \cdot 2^{k-1} - y, t \cdot 2^{k-1} - x)$. This is then equivalent to $a\equiv_{2^{k-1}} b$ 
    for some $ b\in [- y, - x)$. Using that $2u$ is a multiple of $2^{k-1}$ and $k\in [\log u]$, this is in turn equivalent to $a \equiv_{2^{k-1}} b$ for some $b\in [2u - y, 2u - x)$.
\end{proof}

Considering $\max_{j\in[x_{i}+a,x_{i+1}+a)}c^{(k)}_j$ as a function of $a$, observe that it has a period of $2^{k-1}$ because of the periodicity of $c_j^{(k)}$. This means that we do not need to consider all the values of $a\in U$ but we can consider only $a\in[0,2^{k-1})$.
For each pair of consecutive elements $x_i$ and $x_{i+1}$ for $i\in[m]$, 
let us define $I_k(x_i,x_{i+1})$ as:
\begin{equation}
    I_k(x_i,x_{i+1})=
    \begin{cases}
        [0,2^{k-1}) & \mbox{ if } x_{i+1}-x_{i} \ge 2^{k-1},\\
        [\ell,r) & \mbox{ if } \ell<r, \\ 
        [0,r) \cup [\ell,2^{k-1}) & \mbox{ otherwise.}
    \end{cases}
    \label{eq: I_k}
\end{equation}
where $\ell:=(2u-x_{i+1})\modulo 2^{k-1}$, and 
$r:=(2u-x_{i})\modulo 2^{k-1}$. 
Note that $I_k(x_{i},x_{i+1})$ can be represented with one or two intervals. 

Let $\mathcal{I}_k$ be the multiset $\mathcal{I}_k = \lBrace I_k(x_i,x_{i+1})\ :\ i\in [m] \rBrace$. 
Consider the number of edges of the trie at level $k$ for the shifted set $S+a$ for a specific point $a\in U$.
By the inner sum of Lemma~\ref{lemma: trie S+a in terms of c^k_j+a} along with Lemma~\ref{lemma: always costs} and Lemma~\ref{lemma: cost interval}, this number is equal to the number of members of $\mathcal{I}_k$ containing a specific point $a\modulo 2^{k-1}$, i.e., to the \emph{depth} of $\mathcal{I}_k$ at position $a\modulo 2^{k-1}$. 
In the following lemma, we show that $\trie(S+a)$ can be expressed as the sum over $k\in[\log u]$ of the depth of $\mathcal{I}_k$ at position $a\modulo 2^{k-1}$.

\begin{lemma} \label{lemma: trie periodic interval depth}
    Let $S=\{x_1,x_2,\cdots,x_m\}$ be a set of integers with $0\le x_1<x_2<\cdots<x_m<u$. Let $x_{m+1}:=x_1+u$ and $a\in U$. Then,
    \[
        \trie(S+a)=\sum_{k=1}^{\log u} \sum_{i=1}^m \ones[(a \modulo 2^{k-1}) \in I_k(x_i, x_{i + 1})].
    \]
\end{lemma}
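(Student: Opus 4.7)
I would start from Lemma~\ref{lemma: trie S+a in terms of c^k_j+a}, which already expresses $\trie(S+a)$ as
\[
\sum_{k=1}^{\log u}\sum_{i=1}^m \max_{j\in [x_i+a,\,x_{i+1}+a)} c^{(k)}_j.
\]
Since each inner maximum is either $0$ or $1$, it suffices to prove, for every fixed $k\in[\log u]$ and $i\in[m]$, the pointwise identity
\[
\max_{j\in [x_i+a,\,x_{i+1}+a)} c^{(k)}_j \;=\; \ones\!\bigl[(a \modulo 2^{k-1}) \in I_k(x_i,x_{i+1})\bigr].
\]
Summing over $i$ and $k$ then yields the lemma.

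To prove this pointwise identity I would case-split according to the three branches in the definition \eqref{eq: I_k} of $I_k(x_i,x_{i+1})$. In the first case, $x_{i+1}-x_i\ge 2^{k-1}$: since $x_i\in[0,u)$ and $x_{i+1}\in[0,2u)$, Lemma~\ref{lemma: always costs} (applied to $x=x_i$, $y=x_{i+1}$) gives max equal to $1$ for every $a$; and by definition $I_k(x_i,x_{i+1})=[0,2^{k-1})$, so $(a\modulo 2^{k-1})\in I_k(x_i,x_{i+1})$ holds trivially. Thus both sides equal $1$.

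In the remaining two cases we have $x_{i+1}-x_i<2^{k-1}$, and Lemma~\ref{lemma: cost interval} tells us that the maximum equals $1$ iff $a\equiv_{2^{k-1}} b$ for some $b\in[2u-x_{i+1},\,2u-x_i)$. Writing $a'=a\modulo 2^{k-1}$, this is equivalent to $a'$ lying in the image of the interval $[2u-x_{i+1},\,2u-x_i)$ under reduction modulo $2^{k-1}$. Since this interval has length $x_{i+1}-x_i<2^{k-1}$, its image wraps around $\ZZ/2^{k-1}\ZZ$ at most once. With $\ell=(2u-x_{i+1})\modulo 2^{k-1}$ and $r=(2u-x_i)\modulo 2^{k-1}$ as the two endpoints (which are distinct because $0<x_{i+1}-x_i<2^{k-1}$ and $2^{k-1}\mid 2u$), the image is exactly $[\ell,r)$ when $\ell<r$, and $[0,r)\cup[\ell,2^{k-1})$ when $\ell>r$. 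These are precisely the two remaining cases of \eqref{eq: I_k}, so $a'\in I_k(x_i,x_{i+1})$ is equivalent to the condition supplied by Lemma~\ref{lemma: cost interval}.

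\textbf{Main obstacle.} The only delicate step is the modular wrap-around analysis in the second case: one must verify that the image modulo $2^{k-1}$ of the length-$(x_{i+1}-x_i)$ integer interval $[2u-x_{i+1},\,2u-x_i)$ matches the two-branch definition of $I_k$ exactly, and in particular that the endpoints $\ell,r$ never coincide when $x_{i+1}-x_i<2^{k-1}$. Once this is dispatched by observing $0<x_{i+1}-x_i<2^{k-1}$ together with $2^{k-1}\mid 2u$, the rest of the proof is a straightforward assembly of Lemmas~\ref{lemma: trie S+a in terms of c^k_j+a}, \ref{lemma: always costs}, and \ref{lemma: cost interval}.
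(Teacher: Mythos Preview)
Your proposal is correct and follows essentially the same route as the paper: reduce via Lemma~\ref{lemma: trie S+a in terms of c^k_j+a} to the pointwise identity, dispatch the case $x_{i+1}-x_i\ge 2^{k-1}$ with Lemma~\ref{lemma: always costs}, and in the remaining case use Lemma~\ref{lemma: cost interval} together with a wrap-around analysis of $[2u-x_{i+1},2u-x_i)$ modulo $2^{k-1}$ to recover the two sub-cases of~\eqref{eq: I_k}. Your observation that $\ell\neq r$ whenever $0<x_{i+1}-x_i<2^{k-1}$ is exactly what makes the ``otherwise'' branch reduce to $\ell>r$, matching the paper's Case~2b.
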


\begin{proof} 
    According to Lemma~\ref{lemma: trie S+a in terms of c^k_j+a}, we have $\trie(S+a)=\sum_{k=1}^{\log u}\sum_{i=1}^m \max_{j \in [x_{i}+a, x_{i+1}+a)} c^{(k)}_{j}$. Hence it is sufficient to show that $a\modulo 2^{k-1}\in I_k(x_i,x_{i+1})$ iff $\max_{j\in[x_i+a,x_{i+1}+a)} c^{(k)}_{j}=1$. We distinguish two cases: (Case 1) If $x_{i+1}-x_{i}\ge 2^{k-1}$, we have $ I_k(x_{i},x_{i+1})=[0,2^{k-1})$ and thus clearly $a\modulo 2^{k-1} \in I_k(x_{i},x_{i+1})$. At the same time $\max_{j\in[x_i + a,x_{i+1} + a)} c^{(k)}_{j}=1$ by Lemma~\ref{lemma: always costs} for any $a\in U$.
    (Case 2) Assume $x_{i+1}-x_{i}<2^{k-1}$. Let $\ell = (2u-x_{i+1}) \modulo 2^{k-1}$ and $r=(2u-x_{i})\modulo 2^{k-1}$ as in Eq.~\eqref{eq: I_k}.  
    By Lemma~\ref{lemma: cost interval}, it holds that $\max_{j\in[x_i + a,x_{i+1} + a)} c^{(k)}_{j}=1$ if and only if $a \equiv_{2^{k-1}} b$ for some $b \in [2u -x_{i+1}, 2u - x_i)$. 
    The latter is equivalent to: 
    \begin{equation}
        a \equiv_{2^{k-1}} b\mbox{ for some }b \in [\ell,\ell+x_{i+1}-x_{i})\label{eq:a eq b}
    \end{equation}
    We have two cases.
    (Case 2a) Assume that $\ell<r$ holds. Since $x_{i+1}-x_{i}<2^{k-1}$, $\ell<r$ if and only if $\ell+x_{i+1}-x_{i}<2^{k-1}$. Observing that $r=(\ell+x_{i+1}-x_{i})\modulo 2^{k-1}$, we have $\ell+x_{i+1}-x_{i}=r$.
    (Case 2b) Now assume that $\ell \ge r$.
    Then Eq.~\eqref{eq:a eq b} is equivalent to $a \equiv_{2^{k-1}} b$ with $b \in [\ell, 2^{k -1})$ or $b \in [2^{k-1}, r+2^{k-1})$, which can be rewritten as $a \equiv_{2^{k-1}} b$ with $b \in [0, r)$ or $b \in [\ell, 2^{k -1})$. This completes the proof.
\end{proof}
Lemma~\ref{lemma: trie periodic interval depth} can be naturally generalized to a sequence of sets $\mathcal{S}+a=\langle S_1+a,\cdots,S_n+a\rangle$. Together with Definition~\ref{def:trie prefix-free sequence}, we obtain:
\begin{equation}
    \trie(\mathcal{S}+a)=\sum_{i=1}^{n}\trie(S_i+a)
    =\sum_{k=1}^{\log u} \sum_{i=1}^n \sum_{j=1}^{|S_i|} \ones[(a \modulo 2^{k-1}) \in I_k(x^{(i)}_j,x^{(i)}_{j+1})].
    \label{eq: trie S+a generalization} 
\end{equation}
where $x^{(i)}_j$ is the $j$-th smallest element of $S_i$.
In the following subsection, we develop algorithms to find an optimal shift $a\in[0,u)$ minimizing $\trie(\mathcal S+a)$ using this formulation.

\subsection{Algorithms for the optimal shift}
\label{subsection: optimum shift}

Let $S_1,\cdots,S_n\subseteq U$ be $n$ sets of integers. For $i\in [n]$ and $j\in [|S_i|]$, let $x^{(i)}_j$ denote the $j$-th smallest element of $S_i$. For $k\in [\log u]$, let $D_k[0..2^{k-1})$ be a sequence of length $2^{k-1}$ such that, for $a \in[0,2^{k-1})$,
\begin{equation}
    D_k[a] = 
    \sum_{i=1}^n \sum_{j=1}^{|S_i|} \ones[(a \modulo 2^{k-1}) \in I_k(x^{(i)}_j,x^{(i)}_{j+1})].
    \label{eq: D_k interval seg}
\end{equation}
Observe that $D_k[a]$ is defined as the two inner sums of Eq.~\eqref{eq: trie S+a generalization}; in other words, $D_k[a]$ is the number of interval segments at level $k$ that contain a specific position $a\in[0,2^{k-1})$. To consider the cumulative sum of the number of interval segments that contain position $a$ up to level $k\in[\log u]$, let $C_k$ be a sequence of length $2^k$ that, for $a\in[0,2^k)$, is defined as
\begin{equation}
    C_k[a]=\sum_{k'=1}^{k} D_{k'}[a \modulo 2^{k'-1}].
    \label{eq: Ck is sum of Dk}
\end{equation}
Then, by Equations \eqref{eq: trie S+a generalization} and \eqref{eq: D_k interval seg} it holds that $\trie(\mathcal{S}+a) = C_{\log u}[a]$ for every $a\in U$.
Therefore, if we can compute $C_k$ in an efficient way, this will allow us to find an optimal shift by computing $\arg\min_{a\in U}C_{\log u}[a]$. To do this, now consider an abstract data type $\mathcal{D}$ that supports the following four operations:
\begin{enumerate}
    \item $\mathcal{D}.\mathtt{initialize}()$: create a sequence $A$ of length 1 and initialize it as $A[0]\gets 0$.
    \item $\mathcal{D}.\mathtt{add}(\ell,r)$: update $A[a]\gets A[a]+1$ for $a\in [\ell,r)$.
\item $\mathcal{D}.\mathtt{extend}()$: duplicate the sequence as $A\gets AA$.
    \item $\mathcal{D}.\mathtt{argmin}()$: return $\arg\min_{0\le a<|A|}A[a]$.
\end{enumerate}
\begin{algorithm}[t]
$\mathcal{D}.\mathtt{initialize}()$ \tcp*{Initialize and start with universe of size $2^0$}
\For{$k = 1 .. \log u$}{
    \For{$i = 1 .. n$}{
        \For{$j = 1 .. |S_i|$}{
            \For{each interval $[\ell,r)$ forming $I_k(x^{(i)}_j,x^{(i)}_{j+1})$ \label{alg: line: foreach}}{
                $\mathcal{D}.\mathtt{add}
                (\ell,r)$   \tcp*{Add interval segments} \label{alg: line: D add}
            }
        }
    }
    $\mathcal{D}.\mathtt{extend}()$ 
    \tcp*{Extend the universe size into $[0,2^k)$} \label{alg: line: D extend}
}
\textbf{return} $\mathcal{D}.\mathtt{argmin}()$ \label{alg: line: D argmin}\;
\caption{General algorithm for finding an optimal shift of $S_1,\cdots,S_n$.} \label{alg: general procedure}
\end{algorithm}
Based on these operations, in
Algorithm~\ref{alg: general procedure} we describe a general procedure to find an optimal shift. For each level $k\in [\log u]$ and every set $S_i$, we iterate on every pair of consecutive elements $x^{(i)}_j, x^{(i)}_{j+1}$. In Line~\ref{alg: line: foreach}, we use the fact that $I_k(\cdot)$ can be represented as the union of at most two intervals on $U$, see Eq.~\eqref{eq: I_k}. We increment $A[a]$ by 1
for every $a\in[\ell,r)$ by calling $\mathcal{D}.\mathtt{add}(\ell,r)$ 
(Line~\ref{alg: line: D add}). 
After processing all consecutive pairs, we extend the universe into $[0,2^k)$ (Line~\ref{alg: line: D extend}). To see that the algorithm is correct, observe that the amount by which $A[a]$ is incremented in the three inner loops equals $D_k[a]$ 
according to Eq.~\eqref{eq: D_k interval seg}. Eq.~\eqref{eq: Ck is sum of Dk} yields that, for $k\in[\log u]$ and $a\in[0,2^{k-1})$, the number $C_k$ can be written recursively as 
\begin{equation}
    C_{k}[a+2^{k-1}]=C_{k}[a]=C_{k-1}[a] + D_k[a],
    \label{eq: C_k recursive}
\end{equation}
where we define $C_0:=\langle 0\rangle$ as the base case.
Then, at the end of the $k$-th iteration of the outer for loop (after Line~\ref{alg: line: D extend}), the sequence $A$ 
represented by $\mathcal{D}$ is exactly $C_k$. The running time depends on how $\mathcal{D}$ is implemented. In the following two subsections, we will present two different ways of implementing the data structure $\mathcal{D}$. These allow us to find an optimal shift in $O(u+N\log u)$ with simple arrays and $O(N\log^2u)$ time with a dynamic DAG structure.

\subsubsection{\texorpdfstring{$O(u+N\log u)$}{O(u+N log u)}-time algorithm}
\label{subsubsection: u+Nlogu}

Our first solution is simple: we represent $A$ implicitly by storing an array $\Delta[0..|A|-1]$ 
of length $|A|$, encoding the differences between adjacent elements of $A$.
At the beginning, $\Delta$ is initialized as an array of length 1 containing the integer 0. 
Operation $\mathcal{D}.\mathtt{add}(\ell,r)$ is implementing by incrementing $\Delta[\ell]$ by 1 unit and (if $r<|\Delta|$) decrementing $\Delta[r]$ by 1 unit, in $O(1)$ time (Lines~\ref{alg: u+Nlogu line: add1}--\ref{alg: u+Nlogu line: add2} of Algorithm \ref{alg: u+Nlogu} in Appendix \ref{app: deferred u+Nlogu}). 
As far as operations $\mathcal{D}.\mathtt{extend}()$ and $\mathcal{D}.\mathtt{argmin}()$ are concerned, they can be supported naively in linear $O(|\Delta|) = O(|A|)$ time with a constant number of scans of array $\Delta$. See Algorithm \ref{alg: u+Nlogu} for the details.

Next, we analyze the running time of Algorithm~\ref{alg: general procedure} when using this simple implementation of $\mathcal D$. Operation $\mathcal{D}.\mathtt{add}()$ is called $O(N\log u)$ times, each of which costs $O(1)$ time. 
Operation $\mathcal{D}.\mathtt{extend}()$ is called $\log u$ times, and each call costs $O(|A|)$ time. Each time this operation is called, the length of $A$ doubles (starting from $|A|=1$). The overall cost of these calls is therefore $O(1+2+4+\dots + u) = O(u)$. Finally, 
$\mathcal{D}.\mathtt{argmin}()$ is called only once when $|A| = u$, and  therefore it costs $O(u)$ time. 
We conclude that, using this implementation of $\mathcal D$, Algorithm \ref{alg: general procedure} runs in $O(u + N\log u)$ time. Within this running time we actually obtain a much more general result, since we can evaluate any entry of $A$:
\begin{theorem}
    Given a sequence $\mathcal{S}=\langle S_1,\cdots,S_n\rangle$ of $n$ subsets of $U$, we can compute $\trie(\mathcal{S}+a)$ for all $a\in U$ in total $O(u + N\log u)$ time.
\end{theorem}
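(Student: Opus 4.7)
The plan is to implement the abstract data type $\mathcal{D}$ in Algorithm~\ref{alg: general procedure} as a difference array $\Delta[0\ldots |A|-1]$ encoding $A$ implicitly, and then show that each invocation of Algorithm~\ref{alg: general procedure} with this implementation produces the full array $A = C_{\log u}$ satisfying $A[a] = \trie(\mathcal{S}+a)$ for every $a\in U$, within the claimed budget. Concretely, I initialize $\Delta$ as the length-$1$ array $[0]$; implement $\mathcal{D}.\mathtt{add}(\ell,r)$ in $O(1)$ time by incrementing $\Delta[\ell]$ and, if $r<|\Delta|$, decrementing $\Delta[r]$; implement $\mathcal{D}.\mathtt{extend}()$ in $O(|\Delta|)$ time by first making the prefix-sum adjustment so that the duplicated half starts from the correct value (equivalently, by appending a copy of $\Delta$ with its first entry adjusted by $-A[|A|-1]$), which preserves the invariant that a prefix sum over $\Delta$ yields $A$; and implement $\mathcal{D}.\mathtt{argmin}()$ in $O(|\Delta|)$ time by a single scan with a running prefix sum. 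After the last iteration, one more scan of $\Delta$ outputs $A[a]$ for every $a\in U$, which by Eq.~\eqref{eq: trie S+a generalization} and Eq.~\eqref{eq: Ck is sum of Dk} equals $\trie(\mathcal{S}+a)$.

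For the running-time bound I would account for the three kinds of work separately. Each consecutive pair $(x^{(i)}_j,x^{(i)}_{j+1})$ contributes at most two intervals at each of the $\log u$ levels, by Eq.~\eqref{eq: I_k}; hence $\mathcal{D}.\mathtt{add}$ is called $O(N\log u)$ times, each in $O(1)$, for a total of $O(N\log u)$. The $\mathtt{extend}$ operation is called $\log u$ times on arrays of lengths $1,2,4,\ldots,u/2$, contributing $O(1+2+\cdots+u) = O(u)$. Finally, the concluding scan to read out all values (or the $\mathtt{argmin}$ call, if only the optimum is desired) costs $O(u)$. Summing gives $O(u + N\log u)$.

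For correctness, I would argue by induction on $k$ that after the $k$-th iteration of the outer for-loop of Algorithm~\ref{alg: general procedure} the array represented by $\Delta$ equals $C_k$ on the index range $[0,2^k)$. The base case $k=0$ is $C_0 = \langle 0\rangle$. For the inductive step, Eq.~\eqref{eq: D_k interval seg} together with Lemma~\ref{lemma: trie periodic interval depth} shows that the net effect of all the $\mathtt{add}$ calls during iteration $k$ is to add $D_k[a]$ to the current $A[a]$ for every $a\in[0,2^{k-1})$; combined with the subsequent $\mathtt{extend}$, this exactly realises the recurrence~\eqref{eq: C_k recursive}. Setting $k=\log u$ and invoking~\eqref{eq: trie S+a generalization} then gives $A[a]=\trie(\mathcal{S}+a)$ for every $a\in U$, establishing the theorem.

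The only subtle point, and thus the main obstacle, is implementing $\mathtt{extend}$ so that the duplicated half has the same absolute prefix-sum values as the first half under a single running prefix sum of $\Delta$; this is handled by the aforementioned correction at the new midpoint, and once this invariant is maintained, the remaining analysis is the telescoping geometric sum $\sum_{k=0}^{\log u} 2^k = O(u)$ for the $\mathtt{extend}$ calls, which dominates together with the $O(N\log u)$ bound on $\mathtt{add}$ calls.
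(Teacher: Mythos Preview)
Your proposal is correct and follows essentially the same approach as the paper: the difference array $\Delta$, the $O(1)$ implementation of $\mathtt{add}$, the linear-time $\mathtt{extend}$ with the boundary correction at the new midpoint, and the geometric sum for the total $\mathtt{extend}$ cost are exactly what the paper does in Section~\ref{subsubsection: u+Nlogu} and Appendix~\ref{app: deferred u+Nlogu}. Your explicit inductive correctness argument via Eq.~\eqref{eq: C_k recursive} is slightly more detailed than the paper's own justification, but otherwise the two proofs coincide.
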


\subsubsection{\texorpdfstring{$O(N\log^2 u)$}{O(N log2 u)}-time algorithm}
\label{subsubsection: Nlog2u}

If $u$ is too large compared to $N$, then the simple solution presented in the previous subsection is not efficient. 
In this case, we instead represent $A$ with a DAG-compressed variant of the segment tree \cite{bentley80,kline95}; 
see Algorithm \ref{alg: Nlog2u 2} in Appendix~\ref{app: deferred Nlog2u} for the details.
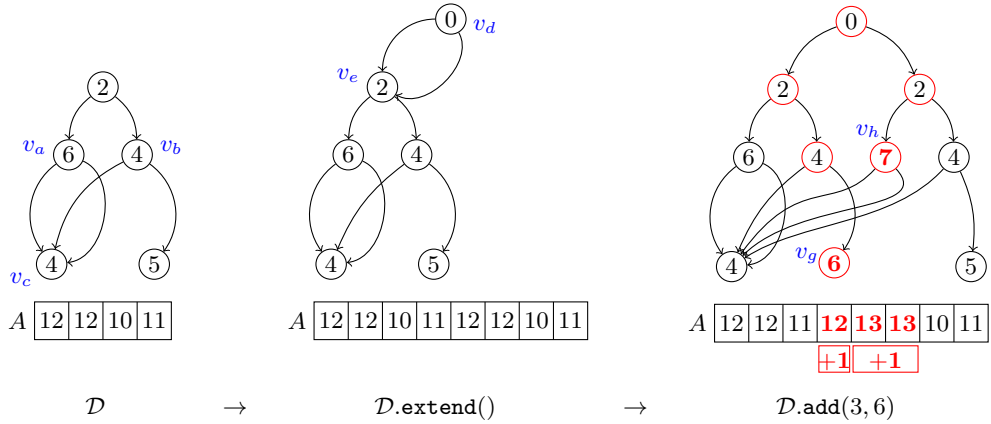
\begin{figure}[t]
    \begin{center}
    \setlength\extrarowheight{-10pt}
    \begin{tabular}{ccccc}
    \begin{tikzpicture}[
x=0.45cm,y=0.5cm,
v/.style={circle, inner sep=0pt, minimum size=4mm,draw=black,fill=white},
l/.style={color=blue},
]
\node    at ( 3.5, 8) {};
 \node[v] (v1) at ( 1.5, 6.2) {2};
  \node[v] (v2) at ( 0.5, 4.4) {6}; \node[l] at ( -0.5, 4.6) {$v_a$};
   \node[v] (v3) at ( 0.0, 1.5) {4}; \node[l] at ( -0.9, 1.1) {$v_c$};
  \node[v] (v4) at ( 2.5, 4.4) {4}; \node[l] at ( 3.5, 4.6) {$v_b$};
   \node[v] (v5) at ( 3.0, 1.5) {5};

\path[->] (v1) edge [bend right=25] (v2);
\path[->] (v1) edge [bend left =25] (v4);
\path[->] (v2) edge [bend right=45] (v3);
\path[->] (v2) edge [bend left =75] (v3);   
\path[->] (v4) edge [bend right=25] (v3);
\path[->] (v4) edge [bend left =45] (v5);  

\draw (-0.5,-0.5) rectangle ++ (4,1);
\foreach \x in {1,...,3}{
    \draw (\x-0.5,-0.5) -- (\x-0.5,0.5);
}
\node at (-1,0) {$A$};
\node at ( 0,0) {12};
\node at ( 1,0) {12};
\node at ( 2,0) {10};
\node at ( 3,0) {11};

\node at ( 2.5,-1.3) {};

\end{tikzpicture}&&
    \begin{tikzpicture}[
x=0.45cm,y=0.5cm,
v/.style={circle, inner sep=0pt, minimum size=4mm,draw=black,fill=white},
l/.style={color=blue},
]
\node[v] (v0) at ( 3.5, 8) {0}; \node[l] at (4.5, 7.8) {$v_d$};
 \node[v] (v1) at ( 1.5, 6.2) {2}; \node[l] at (0.5, 6.5) {$v_e$};
  \node[v] (v2) at ( 0.5, 4.4) {6};
   \node[v] (v3) at ( 0.0, 1.5) {4};
  \node[v] (v4) at ( 2.5, 4.4) {4};
   \node[v] (v5) at ( 3.0, 1.5) {5};

\path[->] (v0) edge [bend right=45] (v1);
\path[->] (v0) edge [bend left =75] (v1);
\path[->] (v1) edge [bend right=25] (v2);
\path[->] (v1) edge [bend left =25] (v4);
\path[->] (v2) edge [bend right=45] (v3);
\path[->] (v2) edge [bend left =75] (v3);   
\path[->] (v4) edge [bend right=15] (v3);
\path[->] (v4) edge [bend left =45] (v5);  

\draw (-0.5,-0.5) rectangle ++ (8,1);
\foreach \x in {1,...,7}{
    \draw (\x-0.5,-0.5) -- (\x-0.5,0.5);
}
\node at (-1,0) {$A$};
\node at ( 0,0) {12};
\node at ( 1,0) {12};
\node at ( 2,0) {10};
\node at ( 3,0) {11};
\node at ( 4,0) {12};
\node at ( 5,0) {12};
\node at ( 6,0) {10};
\node at ( 7,0) {11};

\node at ( 2.5,-1.3) {} ;

\end{tikzpicture}&&
    \begin{tikzpicture}[
x=0.45cm,y=0.5cm,
v/.style={circle, inner sep=0pt, minimum size=4mm,draw=black,fill=white},
vb/.style={draw=red,text=red},
l/.style={color=blue},
]
\node[v,draw=red] (v0) at ( 3.5, 8) {0};
 \node[v,draw=red] (v1) at ( 1.5, 6.2) {2};
  \node[v] (v2) at ( 0.5, 4.4) {6}; 
   \node[v] (v3) at ( 0.0, 1.5) {4};
  \node[v,draw=red] (v4) at ( 2.5, 4.4) {4}; 
   \node[v,vb] (v5) at ( 3.0, 1.6) {\textbf{6}}; \node[l] at (2.2, 1.8) {$v_g$};
 \node[v,draw=red] (v6) at ( 5.5, 6.2) {2}; 
  \node[v,vb] (v7) at ( 4.5, 4.4) {\textbf{7}}; \node[l] at (4.0, 5.1) {$v_h$};
  \node[v] (v9) at ( 6.5, 4.4) {4};
   \node[v] (v8) at ( 7.0, 1.5) {5}; 
  
\path[->] (v0) edge [bend right=25] (v1);
\path[->] (v0) edge [bend left =25] (v6);
\path[->] (v1) edge [bend right=25] (v2);
\path[->] (v1) edge [bend left =25] (v4);
\path[->] (v2) edge [bend right=45] (v3);
\path[->] (v2) edge [bend left =75] (v3);   
\path[->] (v4) edge [bend right=15] (v3);
\path[->] (v4) edge [bend left =45] (v5);  

\path[->] (v6) edge [bend right=25] (v7);  
\path[->] (v6) edge [bend left =25] (v9);  
\path[->] (v7) edge [to path={(\tikztostart) .. controls +(-1.5,-1.5) and +(2,2.5) ..  (\tikztotarget) \tikztonodes}] (v3);  
\path[->] (v7) edge [to path={(\tikztostart) .. controls +(1.5,-1.5) and +(2,1.8) ..  (\tikztotarget) \tikztonodes}] (v3);  

\path[->] (v9) edge [to path={(\tikztostart) .. controls +(-1.8,-1.8) and +(2,1.2) ..  (\tikztotarget) \tikztonodes}] (v3);  
\path[->] (v9) edge [bend left=15] (v8);  

\draw (-0.5,-0.5) rectangle ++ (8,1);
\foreach \x in {1,...,7}{
    \draw (\x-0.5,-0.5) -- (\x-0.5,0.5);
}
\node at (-1,0) {$A$};
\node at ( 0,0) {12};
\node at ( 1,0) {12};
\node at ( 2,0) {11};
\node at ( 3,0) {\textcolor{red}{\textbf{12}}};
\node at ( 4,0) {\textcolor{red}{\textbf{13}}};
\node at ( 5,0) {\textcolor{red}{\textbf{13}}};
\node at ( 6,0) {10};
\node at ( 7,0) {11};

\draw[color=red] (3.55,-1.3) rectangle ++ (1.9,0.7);
\node at ( 4.5,-1) {\textcolor{red}{$\mathbf{+1}$}};

\draw[color=red] (2.55,-1.3) rectangle ++ (0.9,0.7);
\node at ( 3.0,-1) {\textcolor{red}{$\mathbf{+1}$}};

\end{tikzpicture}\\
    $\mathcal{D}$
    &$\rightarrow$
    &$\mathcal{D}.\mathtt{extend}()$
    &$\rightarrow$
    &$\mathcal{D}.\mathtt{add}(3,6)$\\
    \end{tabular}
    \end{center}
    \caption{DAG-compressed representation of $A$. (Left) Each element $A[i]$ is represented as the sum of the values stored in its corresponding root-to-leaf path. The two children of $v_a$ and the left child of $v_b$ are represented by a single node $v_c$. (Middle) Duplicating the whole content can be performed by creating a new root ($v_d$) with the old root ($v_e$) being referred as both of its left and right children. (Right) Incrementing $A[i]$ by 1 for each $i\in[3,6)=[3,5]$ is performed by incrementing the values in $v_g$ and $v_h$ by 1 each, which covers $[3,3]$ and $[4,5]$, respectively. We duplicate every node that has more than one incoming edges when it is visited so that the visited nodes (indicated with red nodes) should have unique paths from the root.}
\end{figure}
Intuitively, consider the complete binary tree of height $\log|A|$ where the root stores a counter associated to the whole array $A$, its left/right children store a counter associated to $A[0,|A|/2-1]$ and $A[|A|/2, |A|-1]$, respectively, and so on recursively (up to the leaves, which cover individual values of the array). 
Assume that these counters are initialized to 0.
Observe that for every $0\le \ell<r \le |A|$, there exists the coarsest partition of $A[\ell,r-1]$ into $O(\log |A|)$ disjoint intervals covered by nodes of the tree.
The idea is to support $\mathcal{D}.\mathtt{add}(\ell,r)$ by incrementing by 1 unit the counter associated to those nodes. 
To avoid spending time $O(|A|)$ for operation  $\mathcal{D}.\mathtt{extend}()$, however, we cannot afford duplicating the whole tree when this operation is called. In this case, our idea is simple: since $\mathcal{D}.\mathtt{extend}()$ duplicates the whole content of $A$, in $O(1)$ time we simply create a new node covering $AA$ and make both its two outgoing edges (left/right child) lazily point to the node covering $A$, i.e., the old root of the tree. In other words, we represented the tree as a DAG, collapsing nodes covering identical sub-arrays. This modification makes it necessary to (possibly) duplicate at most $O(\log |A|)$ nodes at each call of $\mathcal{D}.\mathtt{add}(\ell,r)$: a duplication happens when, starting from the root, we reach a node $x$ having more than one incoming edges. Without loss of generality, assume that we are moving from $y$ that covers sub-array $A[i,i+2^{k+1}-1]$ to its left child. Since $x$ has more than one incoming edges, $A[i,i+2^{k}-1]$ is not the unique interval that $x$ covers. We create a new node $x'$ by duplicating $x$, then make $x'$ the left child of $y$ so that $A[i,i+2^{k}-1]$ is the unique interval that $x'$ covers. Then proceed to $x'$. 
Since $A[\ell,r-1]$ is covered by $O(\log|A|)$ nodes, starting this procedure at the root duplicates at most $O(\log|A|)$ nodes of the DAG and operation $\mathcal{D}.\mathtt{add}(\ell,r)$ therefore costs $O(\log|A|) \subseteq O(\log u)$ time. This slow-down (with respect to the $O(1)$ cost of the same operation in the previous subsection) is paid off by the fact that we do not need to create $O(|A|)$ new nodes (i.e.\ duplicate the tree) at each call of $\mathcal{D}.\mathtt{extend}()$. 
With this representation, the operation $\mathcal{D}.\mathtt{argmin}()$ can be implemented in $O(1)$ time as well by simply associating to each node the smallest value in the sub-array of $A$ covered by that node, as well as its index (these values are updated inside $\mathcal{D}.\mathtt{add}(\ell,r)$). Operations $\mathcal{D}.\mathtt{initialize}()$ and $\mathcal{D}.\mathtt{extend}()$ trivially take $O(1)$ time.

Plugging this structure into Algorithm~\ref{alg: general procedure}, observe that the running time is dominated by $\mathcal{D}.\mathtt{add}(\cdot)$, which is called $O(N\log u)$ times. Since each call to this function takes $O(\log u)$ time as discussed above, we finally obtain: 

\begin{theorem}
    Given a sequence $\mathcal{S}=\langle S_1,\cdots,S_n\rangle$ of $n$ subsets of $U$, we can compute one value $a$ minimizing $\trie(\mathcal{S}+a)$ in $O(N\log^2 u)$ time.
\end{theorem}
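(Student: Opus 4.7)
The plan is to instantiate Algorithm~\ref{alg: general procedure} with the DAG-compressed segment tree sketched just before the theorem, and verify that the four abstract operations cost $O(1)$, $O(\log u)$, $O(1)$ (extend, add, argmin, respectively), so that the total time is dominated by the $O(N \log u)$ calls to $\mathcal{D}.\mathtt{add}$. First I would formalize the structure: each DAG node $v$ represents a contiguous sub-range of $A$ of length a power of two and stores (i) an additive counter $\delta(v)$ and (ii) the minimum value of the covered subarray together with the leftmost index attaining it, both computed assuming that no $\delta$ of an ancestor contributes. The invariant is that $A[i]$ equals the sum of $\delta(v)$ over the unique root-to-leaf path leading to position $i$, and that $\mathcal{D}.\mathtt{argmin}()$ is simply the minimum pair stored at the current root.

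Next I would argue the cost of each operation. Initialization trivially builds one node in $O(1)$ time. The operation $\mathcal{D}.\mathtt{extend}()$ creates a single new root whose left and right children both point to the previous root; this correctly represents $AA$ and its minimum pair in $O(1)$ time. The key operation is $\mathcal{D}.\mathtt{add}(\ell, r)$: starting from the root, we descend toward the $O(\log |A|)$ canonical segment-tree nodes covering $[\ell, r)$ and increment $\delta$ at each of them by one. Whenever the descent is about to follow an edge into a node $x$ with more than one incoming edge, we first clone $x$ into a fresh $x'$ that inherits the same $\delta$ and the same two outgoing edges, reroute the current edge to $x'$, and then continue the descent in $x'$. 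After the descent, the stored minima along the visited path are refreshed bottom-up. Since the canonical decomposition of $[\ell, r)$ has size $O(\log |A|) \subseteq O(\log u)$, each add visits and possibly clones only $O(\log u)$ nodes, giving the claimed $O(\log u)$ cost. Correctness of the overall algorithm is then immediate from the invariant established in Equation~\eqref{eq: C_k recursive}: at the end of iteration $k$, the logical array represented by $\mathcal{D}$ coincides with $C_k$, so after $k = \log u$ the returned index minimizes $C_{\log u}[a] = \trie(\mathcal{S} + a)$.

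The main obstacle is proving that the cloning rule preserves the root-to-leaf-sum invariant for positions outside the currently updated range. Concretely, whenever we clone a shared node $x$ into $x'$ along the descent, we must show that (a) the positions reachable from $x$ via \emph{other} incoming edges are unaffected, which holds because $\delta(x)$ is unchanged and their path still terminates at $x$; and (b) the positions reachable from $x'$ that lie outside $[\ell, r)$ retain their original values, which holds because $x'$ initially has the same $\delta$ and children as $x$, and $\delta$ changes are subsequently applied only to canonical nodes fully contained in $[\ell, r)$ on the cloned path. Once this local correctness is checked, the running-time accounting is straightforward: Algorithm~\ref{alg: general procedure} issues $O(N \log u)$ calls to $\mathcal{D}.\mathtt{add}$ (each consecutive pair $(x^{(i)}_j, x^{(i)}_{j+1})$ contributes at most two intervals per level $k$, summed over $\log u$ levels), $\log u$ calls to $\mathcal{D}.\mathtt{extend}$, and one call to $\mathcal{D}.\mathtt{argmin}$, for a grand total of $O(N \log^2 u)$ time, as claimed.
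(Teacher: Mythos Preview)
Your proposal is correct and follows essentially the same approach as the paper: instantiate Algorithm~\ref{alg: general procedure} with a DAG-compressed segment tree whose nodes carry an additive counter and a (min, argmin) pair, implement $\mathtt{extend}$ by creating a new root with both children pointing to the old root, and implement $\mathtt{add}$ by descending to the $O(\log u)$ canonical nodes while cloning any shared node encountered. Your explicit invariant (that $A[i]$ equals the sum of $\delta$ along the unique root-to-leaf path) and your discussion of why cloning preserves values outside $[\ell,r)$ make the correctness argument slightly more detailed than the paper's main text, which defers these points to the appendix pseudocode.
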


Finally, we remark that using more elaborate arguments one can obtain an algorithm running in $O(N\log u)$ time. Due to space limitations, we refrain from detailing this approach here, but will include it in the extended version of the article.

\section{Optimal Ordered Encoding}\label{sec:optimum ordered}

To compute the optimal \emph{ordered encoding}, we employ the equivalent problem formulation of Section \ref{sec:equivalent problem formulation}.
For any $x\in U$, let $A_x = \{i\in [n]\ :\ x\in S_i\}$. Recall that we can assume, w.l.o.g., that $A_x\neq \emptyset$ for all $x\in U$ and $S_i\neq \emptyset$ for all $i\in [n]$. 
Our goal is to find the binary tree $T$ with leaves $0, \ldots, u - 1$ (in this order) minimizing $c(T) := \sum_{v\in T\setminus \{r(T)\}} | \bigcup_{x \in L(T_v)} A_x|$.
For a binary tree $T$, let $T_\ell$ and $T_r$ be the left and right sub-tree of the root of $T$, respectively.
We show that the dynamic programming solution of Knuth \cite{knuth1971optimum} can be applied to our scenario. 
Consider the following alternative cost function: 
$d(T) := \Big|\bigcup_{x \in L(T)} A_x\Big| + d(T_\ell) + d(T_r)$.
The (recursive) function $d(T)$ is related to $c(T)$ by the following equality: $c(T) = d(T) - |\bigcup_{x \in L(T)} A_x|$.
It is not hard to turn the definition of $d(T)$ into a set of dynamic programming formulas computing both the best tree $T$ and its cost $d(T)$ by creating, for every $0 \le x\le y < u$, a variable $d_{x,y}$ whose final value will be $\min_{T'} d(T')$ (where $T'$ runs over all trees such that $L(T') = \{x,x+1, \dots, y\}$), and a pre-computed constant $a_{x,y} = \Big|\bigcup_{t=x}^{y} A_t \Big|$.
Given the constants $a_{x,y}$, the following set of dynamic programming formulas finds (bottom up, i.e., by increasing $y-x$) the value $c(T) = d(T) - |\bigcup_{x \in L(T)} A_x| = d_{0, u-1} - a_{0,u-1}$ for the optimal tree $T$ and, by standard backtracking, the tree $T$ itself:
\begin{align*}
	\text{initialization:}\qquad 
	a_{x,y}	&:=\quad \Big|\bigcup_{t=x}^{y} A_t \Big|	&\text{for all }0\le x \le y \le u - 1\\
	d_{x,x}	&:=\quad a_{x,x}						&\text{for all }x \in U\\
	\text{recursion:}\qquad
	d_{x,y}	&:=\quad a_{x,y} + \min_{x<z\le y} d_{x, z - 1} + d_{z, y}					&\text{for all }0\le x < y \le u - 1
\end{align*}

The optimal tree's topology can be retrieved by backtracking: if $L(T') = \{x,x+1, \dots, y\}$, then the number of leaves in $T'_\ell$ 
is equal to 
$(\argmin_{x<z\le y} d_{x, z - 1} + d_{z, y})-i$ (we start from the root of $T$ with $x=0$ and $y=u-1$).
Below, we show how to compute constants $a_{x,y}$ in $O(N+u^2)$ time. Since the above formulas can be evaluated bottom-up (i.e.\ by increasing $y-x$) in $O(u^3)$ time\footnote{\scriptsize In his work \cite{knuth1971optimum}, Knuth shows how to evalutate these recursive formulas to $O(u^2)$ time. In the simpler case considered in his article (corresponding to our problem with pairwise-disjoint $A_0, \dots, A_{u-1}$), one can bound $\argmin_{x<z\le y} d_{x, z - 1} + d_{z, y}$ so that the overall cost of looking for all those optimal values of $z$ amortizes to $u^2$. Unfortunately, in our scenario we haven't been able to prove that the same technique can still be applied.}, the overall running time of the algorithm is $O(N + u^3)$.

We now show how to compute efficiently the constants $a_{x,y}$.
We describe an overview of the algorithm
(see Algorithm \ref{alg:unions} for the pseudocode).
We add to the sequence two sets $A_{-1} = A_u = [n]$, respectively at the beginning and end: the new sequence becomes $A_{-1}, A_0, \dots, A_{u-1}, A_u$. 
The main idea behind the algorithm is to initially compute inside $a_{x,y}$ the number of elements from $[n]$ that are \emph{missing} from $\Big|\bigcup_{t=i}^{j} A_t \Big|$. 
Then, the substitution $a_{x,y} \gets n - a_{x,y}$ will yield the final result. 

\begin{algorithm}[t]
	\caption{Compute $a_{x,y}$ }
        \label{alg:unions}
	\SetKwInOut{Input}{input}
	\SetKwInOut{Output}{output}
	\SetSideCommentLeft
	\LinesNumbered

    \Input{Sets $A_0, \dots, A_{u-1} \subseteq [n]$ of total cardinality $N = \sum_{x=0}^{u-1} |A_x|$, s.t. for every $i\in[n]$ there exists $x\in U$ with $i\in A_x$.}
    \Output{$a_{x,y} = \Big|\bigcup_{t=x}^{y} A_t \Big|$, for each $0\le x \le y < u$}

    \BlankLine

    Add dummy sets $A_{-1} = A_u = [n]$\;

    \textbf{for each } $0\le x,y\le u$ \textbf{do} $a_{x,y} \leftarrow 0$\;
 
    $P[1,\dots,n] \leftarrow -1$\tcp*[r]{Previous occurrence of $i\in[n]$, all initialized to $-1$}

    \BlankLine

    \For{$y = 0, \dots, u$}{

        \For{$i\in A_y$}{

            $x \leftarrow P[i]$ \tcp*[r]{$A_{x+1}, \dots, A_{y-1}$: maximal sequence not containing $i$}
            $P[i] \leftarrow y$\;

            \If{$x<y-1$}{

                $a_{y-1,y-1} \leftarrow a_{y-1,y-1} + 1$\;
                $a_{x,x} \leftarrow a_{x,x} + 1$\;
                $a_{x,y-1} \leftarrow a_{x,y-1} - 1$\; 
                $a_{y-1,x} \leftarrow a_{y-1,x} - 1$\;
            
            }

        }
    
    }

    \BlankLine

    \For{$x=0,\dots,u-1$}{
        \For{$y=u-2, \dots, 0$}{
            $a_{x,y} \leftarrow a_{x,y} +  a_{x,y+1}$\tcp*[r]{Partial sums, row-wise}
        }
    }

    \BlankLine

    \For{$y=0,\dots,u-1$}{
        \For{$x=u-2, \dots, 0$}{
            $a_{x,y} \leftarrow a_{x,y} +  a_{x+1,y}$\tcp*[r]{Partial sums, column-wise}
        }
    }

    \textbf{for each } $0\le x,y< u$ \textbf{do} $a_{x,y} \leftarrow n-a_{x,y}$\;

    \Return $a_{x,y}$ for all $0\le x \le y < u$\;
	
\end{algorithm}

We initialize $a_{x,y} \gets
0$ for all $0\le x,y \le u$. 
Let $A_{x+1}, A_{x+2}, \dots, A_{y-1}$ be a maximal contiguous subsequence of sets not containing a given $i \in [n]$, that is, (i) $i\notin A_{x'}$ for all $x'=x+1, x+2, \dots, y-1$, (ii) $i\in A_{x}$,  and (iii) $i \in A_{y}$. Then, $i \notin \quad \Big|\bigcup_{t=x'}^{y'} A_t \Big|$ for every $i\le x' \le y' \le y$. 
Imagine then adding one unit to $a_{x',y'}$ for all $(x',y') \in [x+1,y-1] \times [x+1,y-1]$. Clearly, if we can achieve this for every $i \in[n]$ and every such maximal contiguous subsequence of sets not containing $i$, at the end each $a_{x,y}$ will contain precisely the value $n - \Big|\bigcup_{t=x}^{y} A_t \Big|$.
The issue is, of course, that adding one unit to $a_{x',y'}$ for all $x < x' \le y' < y$, costs time $O(xy)$ if done naively. The crucial observation is that this task can actually be performed in constant time using (bidimensional) \emph{partial sums}: see Figure~\ref{fig:partial sums} for an example.  Ultimately, this trick allows us computing all $a_{x,y}$ in the claimed $O(N+u^2)$ running time. 

\begin{figure}[h!]
\scriptsize
\begin{minipage}{.32\linewidth}
\[\arraycolsep=3pt\def\arraystretch{1.2}
\begin{array}{c|rrrr}
\tikz{\node[below left, inner sep=1pt] (def) {$x$};%
      \node[above right,inner sep=1pt] (abc) {$y$};%
      \draw (def.north west|-abc.north west) -- (def.south east-|abc.south east);}
  & 0  &  1  & 2   & 3\\\hline
0 & 1  &  0  & -1   &  0 \\
1 & 0  &  1  & 0   & -1 \\
2 & -1 &  0  & 1   &  0 \\
3 & 0  & -1  & 0   &  1\\
\end{array}\]
\end{minipage}
\begin{minipage}{.32\linewidth}
\[
\arraycolsep=4pt\def\arraystretch{1.2}
\begin{array}{c|rrrr}
\tikz{\node[below left, inner sep=1pt] (def) {$x$};%
      \node[above right,inner sep=1pt] (abc) {$y$};%
      \draw (def.north west|-abc.north west) -- (def.south east-|abc.south east);}
  & 0 & 1 &  2  & 3\\\hline
0 & 0 & -1 &  -1  &  0 \\
1 & 0 & 0 & -1  & -1 \\
2 & 0 & 1 &  1  &  0\\
3 & 0 & 0 &  1  &  1\\
\end{array}\]
\end{minipage}
\begin{minipage}{.32\linewidth}
\[
\arraycolsep=6pt\def\arraystretch{1.2}
\begin{array}{c|rrrr}
\tikz{\node[below left, inner sep=1pt] (def) {$x$};%
      \node[above right,inner sep=1pt] (abc) {$y$};%
      \draw (def.north west|-abc.north west) -- (def.south east-|abc.south east);}
  & 0 & 1 & 2  & 3\\\hline
0 & 0 & 0 & 0  & 0 \\
1 & 0 & 1 & 1  & 0 \\
2 & 0 & 1 & 2 & 1\\
3 & 0 & 0 & 1 & 1\\
\end{array}\]
\end{minipage}\caption{
\footnotesize
Matrix $a_{x,y}$. Suppose our goal is to add 1 unit to the two sub-matrices with indices $[1,2]\times [1,2]$ and $[2,3]\times [2,3]$. We show how to achieve this by performing 4 updates for each of these two submatrices, and then performing two scans (partial sums) of total cost $O(u^2)$ over the full matrix. Letting $t$ be the number of sub-matrices to update (in this example, $t=2$), this means that we can perform the $t$ updates in total $O(t + u^2)$ time. 
Assume we start by a matrix $a_{x,y}$ containing only zeros.
To update the sub-matrix with indices $[x',y']\times [x',y']$, we perform these 4 operations: (1) $a_{y',y'} \leftarrow a_{y',y'}+1$, (2) $a_{x'-1,x'-1} \leftarrow a_{x'-1,y'-1} + 1$, (3) $a_{x'-1,y'} \leftarrow a_{x'-1,y'} - 1$, and (4) $a_{y',x'-1} \leftarrow a_{y',x'-1} - 1$. 
\textbf{Left:} applying these four operations for each of the two sub-matrices 
$[1,2]\times [1,2]$ and $[2,3]\times [2,3]$. \textbf{Center}: we compute partial sums row-wise, cumulating from right to left. \textbf{Right}: we compute partial sums column-wise, cumulating bottom-up. At the end, we correctly added 1 unit to the target sub-matrices.}\label{fig:partial sums}
\end{figure}

\subsection{Optimal Shifted Ordered Encoding}
\label{sec:optimum-shifted-ordered-encoding}
Observe that the optimal ordered encoding is not necessarily better than the optimal shifted encoding, because shifted encodings are not ordered (except the case $a=0$). It is actually very easy to get the best of both worlds and compute the ordered encoding $\enc : \{0,1\}^* \rightarrow \{0,1\}^*$ minimizing $\min_{a\in U} \trie(\enc(\mathcal S+a))$. This encoding is guaranteed to be no worse than \emph{both} the best shifted encoding and the best ordered encoding. 

The solution is a straightforward extension of the technique used for the best ordered encoding. 
Let $A_0, \dots, A_{u-1}$ be the sets defined previously. 
Build the sequence of sets $A_0, \dots, A_{u-1}, A_u, \dots, A_{2u-1}$, where $A_{u+x} = A_i$ for all $0\leq x < u$ (that is, we simply create an extra copy of the sets). Compute $a_{x,y}$ and $d_{x,y}$ as discussed previously, with the only difference that now the domain of $x,y$ is doubled: $0\leq x,y < 2u$. It is not hard to see that the optimal shifted ordered encoding has then cost $c'(T) = \min_{0\leq a < u} (d_{a,a+u-1} + a_{a,a+u-1}) = (\min_{0\leq a < u} d_{a,a+u-1}) + a_{0,u-1}$. Again, the tree topology is obtained by backtracking starting from the optimal interval $[a',a'+u)$ given by $a' = \argmin_{0\leq a < u} d_{a,a+u-1}$ (the optimal ordered encoding discussed previously is simply the particular case $a'=0$). The asymptotic cost of computing this optimal shifted ordered encoding is still $O(N+u^3)$.

\section{Experimental Results}
\label{sec:experiments}
We implemented our algorithms for computing the optimal shifted encoding and the optimal shifted ordered encoding in {\tt C++} and made them available at \url{https://github.com/regindex/trie-measure}. We computed the sizes of these encodings on thirteen datasets, the details of which can be found in Table \ref{tab:datasets} in Appendix \ref{sec:experiments}. 
The datasets can be naturally split in four groups according to their origin: 1) eight sequences of sets from \cite{Benson18}, 2) a dataset of paper tags from dblp.org xml dump \cite{dblpdump}, 3) a dataset containing amino acid sets from two protein sequence collections \cite{proteins-long,proteins-high-id}, 4) and two datasets containing ratings and tags of 10000 popular books \cite{goodbooks-10k}. For the last three groups, we generated sequences of sets (i.e. the inputs of our algorithms) by extracting sets of features from the original datasets: in 2) we extracted sets of tags for all dblp entries, in 3) we extracted the set of amino acids contained in each protein sequence, and in the two datasets of group 4) we extracted a set of tags and ratings for each book, respectively. The repository above contains all the generated set sequences.

As far as the shifted trie measure $\trie(\mathcal S + a)$ of Section \ref{section: encoding shifts} is concerned, we evaluated it on the above datasets for all shifts $a\in U$ and reported the following statistics in Table~\ref{tab:summary}: the trie cost for the optimal and worst shifts, the average cost over all shifts, and the percentage differences between the average/worst shifts and the optimal shift. 
Our results indicate that the optimal, average, and worst shifts lead in practice to similar costs. In particular, only five datasets showed a difference larger than 5\%  between the optimal and worst shift (opt-shift/worst-shift(\%) < 95). Among these, the differences range between 15.85\% for {\tt DBLP.xml} and 6.01\% for {\tt tags-math-sx-seqs}. The differences are even smaller when comparing the optimum with the average shift. In this case, only {\tt DBLP.xml} shows an average difference (of 6.95\%) being larger than 5\% (opt-shift/avg-shift(\%) < 95), meaning that the trie measure computed with a random shift on this dataset is, on average, 6.95\% larger than with the optimal shift. 
These results suggest that data structures which encode integer sets as tries using the standard binary integer encoding (such as the subset wavelet tree of Alanko et al. \cite{AlankoBPV23}) 
are often efficient in practice since even arbitrary shifts allow to obtain an encoding being not far from the optimal shifted encoding.

As far as the optimal shifted ordered encoding of Section~\ref{sec:optimum-shifted-ordered-encoding} is concerned, 
the last two columns of Table~\ref{tab:summary} show the size of this encoding for the ten datasets on which we could run our cubic dynamic programming algorithm. 
The results indicate that the size of this encoding tends to be significantly smaller than the optimal shifted encoding discussed in the previous paragraph: seven datasets showed a percentage difference between the optimal shifted encoding and the optimal shifted ordered encoding being larger than 10\%, with a peak of 27.31\% on {\tt tags-math-sx-seqs}. 

We leave it as an open question whether it is possible to compute the optimal shifted ordered encoding in sub-cubic time as a function of $u$. Another interesting research direction is to design fast heuristic (e.g. ILP formulations) for computing the globally-optimal prefix free encoding (NP-hard to compute).

\bibliography{article}

\newpage

\appendix

\section{Additional Material for Section~\ref{section: encoding shifts}}
\subsection{Proof of Lemma~\ref{lemma: trie S+a in terms of c^k_j+a}}
\label{app: deferred lemma trie cj proof}
\lemmatrieeqsumofc*
\begin{proof}
Let $a\in U$ be arbitrary.
We divide $S$ into $S_{<}=\{x\in S:x+a<u\}$ and $S_{\ge}=\{x-u:x\in S\mbox{ and }x+a\ge u\}$ to consider the effect of computing modulo $u$. We distinguish three cases.

First, consider the case where $S_{\ge}=\emptyset$. 
Observe that $c^{(k)}_{u-1}=1$ for every $k\in[\log u]$ by definition and the assumption that $u$ is a power of 2. Therefore we have
\begin{equation}
    \log u=\sum_{k=1}^{\log u}1=\sum_{k=1}^{\log u}c^{(k)}_{u-1}.
    \label{eq: inlemma: lg u}
\end{equation}
Furthermore, we have $c^{(k)}_{u-1} = \max_{j\in[x_m, x_{m+1})} c^{(k)}_j = \max_{[x_m + a, x_{m + 1} + a)} c^{(k)}_{j}$, using the assumption that $x_{m} + a < u \le x_{1}+u=x_{m+1}$ as $S_{\ge}=\emptyset$.
From Definition~\ref{def:trie prefix-free}, Equations~\eqref{eq: trie x y = summax}~and~\eqref{eq: inlemma: lg u}, we then obtain
\begin{align}
    \trie(S+a)
    &=\log u + \sum_{k=1}^{\log u}\sum_{i=1}^{m-1} \max_{j \in [x_{i} + a, x_{i+1} + a)} c^{(k)}_{j}\notag\\
    &=\sum_{k=1}^{\log u}\left( 
        \max_{j\in [x_m + a, x_{m+1} + a)} c^{(k)}_{j} + 
        \sum_{i=1}^{m-1} \max_{j\in [x_{i} + a, x_{i+1} + a)} c^{(k)}_{j}
        \right)\notag\\
    &=\sum_{k=1}^{\log u}\sum_{i=1}^m \max_{j\in [x_{i} + a, x_{i+1} + a)} c^{(k)}_{j}.
    \label{eq: inlemma: trie S+a}
\end{align}
The claim follows analogously when $S_{<}=\emptyset$, since then $c^{(k)}_j=c^{(k)}_{j+u}$ for every $k\in[\log u]$ and every $j\ge 0$.

Now we assume that both $S_{<}$ and $S_{\ge}$ are not empty. Observe that $x_1\in S_{<}$ because it is the smallest element in $S$ and $S_{<}$ is not empty. Recalling that $x_{m+1}:=x_{1}+u$, we can observe that the trie for $(S_{\ge}\cup\{x_{m+1}-u\})+a$ and the trie for $S_{<}+a$ share exactly one path from the root to $x_1+a$. Therefore,
\begin{equation}
    \trie(S+a)=\trie((S_{\ge}\cup\{x_{m+1}-u\})+a)+\trie(S_{<}+a)-\log u.
    \label{eq: inlemma: shifted}
\end{equation}
Let $i^*\in[2, m]$ be the integer such that $x_{i^*}=\min S_{\ge}$. Then $S_<=\{x_1,\cdots,x_{i^*-1}\}$ and $S_{\ge}\cup\{x_{m+1}\}=\{x_{i^*}-u,x_{i^*+1}-u,\cdots,x_{m+1}-u\}$. 
Since $0\le x_1+a <u \le x_{i*}+a$, for every $k\in[\log u]$ it holds that 
$\max_{j\in[ x_1+a , x_{i*}+a)} c^{(k)}_j = c^{(k)}_{u-1} = 1$. 
Therefore we have
\begin{align*}
    \trie(S_{<}+a)&=\log u +\sum_{k=1}^{\log u}\sum_{i=1}^{i^*-2} \max_{j \in [x_i + a, x_{i+1} + a)} c^{(k)}_{j} 
    =\sum_{k=1}^{\log u}\sum_{i=1}^{i^*-1} \max_{j\in [x_i + a, x_{i+1} + a)} c^{(k)}_{j}
\end{align*}
Applying this with Equation~\eqref{eq: inlemma: trie S+a} to Equation~\eqref{eq: inlemma: shifted}, we obtain
\begin{align*}
    \trie(S+a) &= \sum_{k=1}^{\log u}\left (\sum_{i=i^*}^{m+1} \max_{j\in[x_{i} + a, x_{i+1} + a)} c^{(k)}_{j} + \sum_{i=1}^{i^*-1} \max_{j\in[x_{i} + a, x_{i+1} + a)} c^{(k)}_{j}\right )-\log u\\
    &= \sum_{k=1}^{\log u} \sum_{i=1}^m \max_{j\in[x_{i} + a, x_{i+1} + a)} c^{(k)}_{j}+\sum_{k=1}^{\log u} \max_{j\in [x_{m+1} + a, x_{m+2} + a)} c^{(k)}_{j}-\log u.
\end{align*}
where we define $x_{m+2}:=x_{i^*}+u$.

Now note that
$x_{m+1}+a=x_1+a+u<2u=u+u\le x_{i^*}+a+u=x_{m+2}+a$ and thus
$x_{m+1}+a\le 2u-1<x_{m+2}+a$. Since $c^{(k)}_{2u-1}=1$ by definition, it follows that the second maximum is always 1. Hence the second term equals $\log u$, which is canceled with the last term, and thus the claim follows.
\end{proof}


\subsection{Implementation of \texorpdfstring{$\mathcal D$}{D} for the \texorpdfstring{$O(u+N\log u)$}{O(u+N log u)}-time algorithm}
\label{app: deferred u+Nlogu}
All functions implementing $\mathcal{D}$ besides the $\mathtt{extend}$ function are self-explanatory. 
Thus, we proceed with a detailed explanation of $\mathcal{D}.\mathtt{extend}()$, which corresponds to duplicating the sequence $A$ to $A'=AA$. Assume that $A$ is of length $i$, i.e., contains elements $A[0], \ldots, A[i - 1]$ and recall that $\Delta$ at every point has to encode $A$ via differences, i.e., $\Delta[0]=A[0]$ and $\Delta[a] = A[a] - A[a - 1]$ for any $a\in [1, i)$. Hence, when duplicating the sequence $A$, we can simply duplicate $\Delta$ to $\Delta'=\Delta\Delta$ but have to change the value at the ``boundary'', i.e., $\Delta'[i]$. This value has to become $A'[i] - A'[i - 1] = A[0] - A[i - 1]$. We can simply obtain $A[0]$ from $\Delta[0]$ and we can reconstruct $A[i - 1]$ as $\sum_{a = 0}^{i - 1}\Delta[a]$.

\begin{algorithm}[t]
    \SetKwFunction{FInit}{$\mathcal{D}.\mathtt{initialize}$}
    \SetKwFunction{FAdd}{$\mathcal{D}.\mathtt{add}$}
    \SetKwFunction{FExtend}{$\mathcal{D}.\mathtt{extend}$}
    \SetKwFunction{FArgmin}{$\mathcal{D}.\mathtt{argmin}$}
    \SetKwProg{Fn}{function}{:}{}
    
    \Fn{\FInit{}}{
        $\Delta \gets \langle 0 \rangle$\;
    }
    \Fn{\FAdd{$\ell$, $r$}}{
        $\Delta[\ell] \gets \Delta[\ell]+1$ \tcp*{equivalent to $A[a]\gets A[a]+1,\forall a\in[\ell,r)$} \label{alg: u+Nlogu line: add1}
        \lIf{$r<|\Delta|$}{$\Delta[r] \gets \Delta[r]-1$} \label{alg: u+Nlogu line: add2}
    }
    \Fn{\FExtend{}}{
        $i\gets|\Delta|$; $s\gets \sum_{a\in[0,i)}\Delta[a]$ \tcp*{$s$ becomes $A[i-1]$}
        $\Delta\gets\Delta\Delta$;
        $\Delta[i]\gets\Delta[i]-s$ \tcp*{$\Delta[i]$ becomes $A[0] - A[i-1]$} 
    }
    \Fn{\FArgmin{}}{
        $a^*\gets 0$; $v^*\gets \Delta[0]$; $v\gets \Delta[0]$ \tcp*{finding the minimum in the prefix sum}
        \For{$a = 1 .. |\Delta|-1$}{
            $v\gets v+\Delta[a]$\;
            \lIf{$v^*>v$} {
                $a^*\gets a$;
                $v^*\gets v$;
            }
        }
        \textbf{return} $a^*$\;
    }
\caption{Implementation of $\mathcal{D}$ for the $O(u+N\log u)$-time algorithm.} \label{alg: u+Nlogu}
\end{algorithm}

\subsection{Implementation of \texorpdfstring{$\mathcal D$}{D} for the \texorpdfstring{$O(N\log^2u)$}{O(N log2 u)}-time algorithm}
\label{app: deferred Nlog2u}

The pseudocode implementing the interface of the DAG-compressed variant of dynamic segment trees that we described in Section~\ref{subsubsection: Nlog2u} is given in Algorithm~\ref{alg: Nlog2u 2}. 
Algorithm \ref{alg: Nlog2u 2 private functions} describes instead the private functions called by Algorithm~\ref{alg: Nlog2u 2}.
In what follows, we give all implementation details. 

\begin{algorithm}[H]
    \SetKwFunction{FInit}{$\mathcal{D}.\mathtt{initialize}$}
    \SetKwFunction{FAdd}{$\mathcal{D}.\mathtt{add}$}
    \SetKwFunction{FExtend}{$\mathcal{D}.\mathtt{extend}$}
    \SetKwFunction{FArgmin}{$\mathcal{D}.\mathtt{argmin}$}
    
    \SetKwProg{Fn}{function}{:}{}
    
    \Fn{\FInit{}}{
        $\mathcal{D}.\mathtt{root} \gets $\texttt{reallocate\_node}$(\mathrm{null},\mathrm{null},\mathrm{null})$\;
        $\mathcal{D}.\mathtt{height} \gets 1$ \label{alg: line: init height}\;
    }
    \Fn{\FAdd{$\ell$, $r$}}{$\mathcal{D}.\mathtt{root}\gets\mathtt{increment}(\mathcal{D}.\mathtt{root},\ell,r,\mathcal{D}.\mathtt{height})$ \label{alg3: line: add}\;
    }
    \Fn{\FExtend{}}{
    \
        $u\gets \mathcal{D}.\mathtt{root}$;
        $\mathcal{D}.\mathtt{root}\gets $\texttt{reallocate\_node}$(u,u,u)$\;
        $\mathcal{D}.\mathtt{root}.\mathtt{val}\gets0$ ;
        $\mathcal{D}.\mathtt{height}\gets \mathcal{D}.\mathtt{height}+1$\;
    }
    \Fn{\FArgmin{}}{
        \Return $\mathcal{D}.\mathtt{root}.\mathtt{argmin}$\;
    }

\caption{Implementation of $\mathcal{D}$ for the $O(N\log^2 u)$-time algorithm.} \label{alg: Nlog2u 2}
\end{algorithm}

The structure of the DAG-compressed tree is as follows.
It is built in $\log u$ iterations. At each iteration $k\in[\log u]$, the height of the DAG (i.e. the height of the tree resulting from the expansion of the DAG) is $k$ and it is equal to the number of nodes on the path from the root to the leaves. 
At any point, the expansion of the DAG we are building is a complete binary tree of height $k$.
We store this height $k$ in a variable  
$\mathcal{D}.\mathtt{height}$. The root node is stored in $\mathcal{D}.\mathtt{root}$, and it covers the range $[0,2^{k-1})$. We say that the height of the leaves in the tree is 1, parents of leaves are at height 2, and so on. A node $v$ at height $h\ge 1$ covers a range $[x,x+2^{h-1})$ for some integer $x$. Such a node $v$ may have zero or two children. If it has two children, the left child covers $[x,x+2^{h-2})$ and the right child covers $[x+2^{h-2},x+2^{h-1})$, respectively. Note that in the pseudocodes, neither the value of $x$ nor the value of $h$ are stored in the node $v$ explicitly, but they can be reconstructed while navigating the tree.

Node $v$, corresponding to the subsequence $A[x,x+2^{h-1}-1]$, has four variables $v.\mathtt{val}$, $v.\mathtt{argmin}$, $v.\mathtt{min}$, and $v.\mathtt{ref}$, in addition to the pointers to its left and right child $v.\mathtt{left}$ and $v.\mathtt{right}$. 
Recall that the purpose of $\mathcal D$ is to maintain an integer array $A$, increment by one unit all entries belonging to range $A[\ell,r-1]$ via operation $\mathcal D.\mathtt{add}(\ell,r)$, and duplicate the array via operation $\mathcal{D}.\mathtt{extend}()$. For a node $v$ that corresponds to a range $A[x,x+2^{h-1}-1]$, we store those increments that apply to the whole range in a variable $v.\mathtt{val}$. 
The variable $v.\mathtt{argmin}$ stores 
$\argmin_{a\in [0,2^{h-1})} A[x+a]$, while $v.\mathtt{min}$ stores 
$\min_{a\in [0,2^{h-1})} A[x+a]$. 

Finally, $v.\mathtt{ref}$ stores the number of pointers of other nodes to the node $v$. 
The role of this reference counter is to duplicate nodes only when necessary (more details are given below).
This reference counter $v.\mathtt{ref}$ is managed in function $\mathtt{reallocate\_node}(\cdot)$ in Lines~\ref{alg: line: newnode begin}-\ref{alg: line: newnode end}. This function takes three arguments $u$, $w_L$ and $w_R$ to create a new node. If $u\neq\mathrm{null}$, it makes a copy of $u$, and decrements $u.\mathtt{ref}$ because it means one pointer will replace $u$ with its copy. Then it sets the left and right child of the new node to $w_L$ and $w_R$, and (if they are not null pointers) increment $w_L.\mathtt{ref}$ and $w_R.\mathtt{ref}$ by 1 each because the new node will point to them.

All functions implementing $\mathcal{D}$ besides $\mathcal{D}.\mathtt{extend}()$ and $\mathcal{D}.\mathtt{add}(\ell,r)$ are self-explanatory so we do not discuss them.
Function $\mathcal{D}.\mathtt{extend}()$ is also simple as it just allocates a new root, sets its left and right children pointers to the old root, and sets its height to the height of the old root incremented by one unit. Note that the counter $\mathtt{ref}$ associated with the old root is correctly set at 2. The counter $\mathtt{ref}$ associated with the new root is set at 1 (even though the new root is not referenced by any node, this value prevents the code from re-allocating the new root each time function $\mathcal{D}.\mathtt{add}(\ell,r)$ is called).

\begin{algorithm}[t]

    \SetKwFunction{FIncr}{increment}
    \SetKwFunction{FNewnode}{reallocate\_node}

    \SetKwProg{Fn}{function}{:}{}
    
    \Fn{\FNewnode{$u$,$w_L$,$w_R$}}{
        \label{alg: line: newnode begin}
        
        Create a new node $v$\;
        \uIf{$u\ne\mathrm{null}$} { 
            $u.\mathtt{ref}\gets u.\mathtt{ref}-1$ \;
            $(v.\mathtt{min},v.\mathtt{argmin},v.\mathtt{val}) \gets (u.\mathtt{min},u.\mathtt{argmin},u.\mathtt{val})$ \;
            
        }\lElse{
            $(v.\mathtt{val},v.\mathtt{min},v.\mathtt{argmin})\leftarrow(0,0,0)$ 
        }
        $(v.\mathtt{left},v.\mathtt{right})\gets (w_L,w_R)$ \;
            
        \lIf{$w_L\neq \mathrm{null}$} { $w_L.\mathtt{ref}\gets w_L.\mathtt{ref}+1$
        }
        \lIf{$w_R\neq \mathrm{null}$} { $w_R.\mathtt{ref}\gets w_R.\mathtt{ref}+1$
        }
        
        $v.\mathtt{ref}\gets 1$ \;
        \Return $v$ 
        \label{alg: line: newnode end}
    }
    \Fn{\FIncr{$v$,$\ell$,$r$,$h$}}{ \label{alg: line: func increment begin}
        \lIf{$\ell\ge r$}{\Return $v$}
        
        \lIf{$v.\mathtt{ref}>1$}{$v\gets$\texttt{reallocate\_node}$(v,v.\mathtt{left},v.\mathtt{right})$ \label{algo: line: copy node}} 
        
        \If{$r-\ell=2^{h-1}$}{ \label{algo: line: update val}
            $(v.\mathtt{val},~~v.\mathtt{min})\gets (v.\mathtt{val}+1,~~v.\mathtt{min}+1)$\; \Return $v$ \label{algo: line: update val end}
        }
        $v.\mathtt{left}\gets\mathtt{increment}(v.\mathtt{left}, \ell, \min\{r,2^{h-2}\},h-1)$ \label{algo: line: increment recursive left} \;
        $v.\mathtt{right}\gets\mathtt{increment}(v.\mathtt{right}, \max\{\ell-2^{h-2},0\}, r-2^{h-2}, h-1)$ \label{algo: line: increment recursive right}\;
        \uIf{$v.\mathtt{left}.\mathtt{min} \le v.\mathtt{right}.\mathtt{min}$} { \label{algo: line: update min begin}
            $(v.\mathtt{min},~v.\mathtt{argmin})\gets (v.\mathtt{left}.\mathtt{min}+v.\mathtt{val},~v.\mathtt{left}.\mathtt{argmin})$\;
        }
        \lElse{
            $(v.\mathtt{min},~v.\mathtt{argmin})\gets( v.\mathtt{right}.\mathtt{min}+v.\mathtt{val},~v.\mathtt{right}.\mathtt{argmin}+2^{h-2})$
            \label{algo: line: update min end}
        }
        \Return $v$ \label{alg: line: func increment end}
    }
\caption{Private functions of the data structure from Algorithm~\ref{alg: Nlog2u 2}.}
\label{alg: Nlog2u 2 private functions}
\end{algorithm}

We proceed by discussing the details of $\mathcal{D}.\mathtt{add}(\ell,r)$. 
When this function is invoked, we call a  subroutine $\mathtt{increment}(\cdot)$ with arguments representing the root node, the interval $[\ell,r)$, and the height of the tree (Line~\ref{alg3: line: add}).
We start from the root node, and perform $\mathtt{increment}(\cdot)$ recursively. Suppose we arrive at a node $v$ of height $h$ covering a range $[x,x+2^{h-1})$. In Line~\ref{algo: line: copy node}, if $v$ is referred by more than one pointer (i.e., if $v.\mathtt{ref}>1$), this means that the range $[x,x+2^{h-1})$ is not the unique range that $v$ is covering; in other words, $v$ is being reused at more than one place. Thus we make a copy of $v$, and proceed with the copied node. This new node will be returned by the function (Lines~\ref{algo: line: update val end},\ref{alg: line: func increment end}) so that it can replace the old node properly (Lines~\ref{alg3: line: add},\ref{algo: line: increment recursive left}-\ref{algo: line: increment recursive right}). Now we are to perform an update according to the given interval. When $[\ell,r)$ is passed as an argument of the function, it means that we are to increment $A[a]$ for $a\in[x+\ell,x+r)$. If the size of the interval is exactly $2^{h-1}$, we need to increase $A[a]$ by 1 unit for all $a\in[x,x+2^{h-1})$, which can be performed by incrementing $v.\mathtt{val}$ and $v.\mathtt{min}$ by 1 each (Lines~\ref{algo: line: update val}-\ref{algo: line: update val end}). Otherwise, we split the interval at position $x+2^{h-2}$ (i.e., split $[x+\ell,x+r)$ into $[x+\ell,x+2^{h-2})$ and $[x+2^{h-2},x+r)$), then process each split segment with its left and right child, respectively (Lines~\ref{algo: line: increment recursive left}-\ref{algo: line: increment recursive right}).
After processing the insertion at the child nodes, we update $v.\mathtt{min}$ and $v.\mathtt{argmin}$ according to the minimum computed in the children (Lines~\ref{algo: line: update min begin}-\ref{algo: line: update min end}), which will propagate to the root node. If the minimum is from the right child, we add the offset $2^{h-2}$ for updating $v.\mathtt{argmin}$ properly. 

The cost of procedure $\mathcal{D}.\mathtt{add}(\cdot)$ is proportional to $O(\log|A|)$. 
This is because in Line \ref{algo: line: update val} we do not further recurse on the children of $v$ if the local interval $[\ell,r)$ spans the entire range of length $2^{h-1}$ associated with $v$; in turn, this means that the recursive calls to $\mathtt{increment}$ stop on the $O(\log |A|)$ nodes whose associated intervals partition the initial range $[\ell,r)$ on which function $\mathcal{D}.\mathtt{add}(\ell,r)$ was called. Additionally, all ancestors of those nodes have at least two children (notice that, if $\mathtt{increment}$ is called recursively, the recursion always occurs on both children of the node: see Lines \ref{algo: line: increment recursive left} and \ref{algo: line: increment recursive right}), therefore the total number of nodes recursively visited by $\mathcal{D}.\mathtt{add}(\ell,r)$ is  $O(\log |A|)$.

\section{Additional Material for Section~\ref{sec:experiments}}
\begin{table}[h]
\centering
\renewcommand{\arraystretch}{0.9}
\begin{tabular}{r|l|rrrrr}\hline
\multicolumn{2}{l|}{dataset}                        & $\sigma$      & $u$ & $N$      & $n$ & $N/n$  \\ \hline   
1&email-Enron-core-seqs      & 141      & 256       & 14148    & 10428 & 1.36 \\
2&contact-prim-school-seqs   & 242      & 256       & 251546   & 174796 & 1.44 \\
3&contact-high-school-seqs   & 327      & 512       & 377000   & 308990 & 1.22 \\
4&email-Eu-core-seqs         & 937      & 1024      & 252872   & 202769 & 1.25 \\
5&tags-mathoverflow-seqs     & 1399     & 2048      & 125056   & 44950 & 2.78  \\
6&tags-math-sx-seqs          & 1650     & 2048      & 1177312  & 517810 & 2.27 \\
7&coauth-Business-seqs       & 236226   & 2097152   & 849838   & 463070 & 1.84 \\  
8&coauth-Geology-seqs     & 525348     & 2097152 & 3905349   & 1438652 & 2.71  \\ \hline
9&DBLP.xml                   & 26       & 32        & 28815437 & 3939813 & 7.31  \\ \hline
10&proteins-high-id           & 20       & 32        & 40664    &  2128   & 19.11 \\
11&proteins-long              & 25       & 32        & 11051828 &  571282 & 19.35 \\\hline
12&book-ratings               & 5        & 8         & 48763    &  10000  & 4.88 \\
13&tags-book                  & 34252    & 65536     & 999904   &  10000  & 99.99 \\ \hline
\end{tabular}
\caption{Summary of the thirteen datasets. From left to right, we report the dataset id and name, the number of distinct universe elements belonging to the sets ($\sigma$), the size of the universe ($u = 2^{\lceil \log |U| \rceil}$), the total length ($N$), the number of sets ($n$), and the average size of a set ($N/n$).}
\label{tab:datasets}
\end{table}
\begin{table}[h]
\centering
\renewcommand{\arraystretch}{0.9}
\setlength{\tabcolsep}{4.15pt}
\begin{tabular}{r|r|rr|rr|rr}\hline
 \multirow{2}{*}{id}                       & \multirow{2}{*}{opt-shift}  &  \multirow{2}{*}{avg-shift} & $ \multirow{2}{*}{\smash{$\displaystyle \frac{\textrm{opt-shift}}{\textrm{avg-shift(\%)}}$}} $ & \multirow{2}{*}{worst-shift}  & \multirow{2}{*}{\smash{$\displaystyle \frac{\textrm{opt-shift}}{\textrm{worst-shift(\%)}}$}} & \multirow{2}{*}{opt-ord} & \multirow{2}{*}{\smash{$\displaystyle \frac{\textrm{opt-shift}}{\textrm{ opt-ord(\%) }}$}} \\
&&&&&&& \\ \hline
1      & 105708   &  106787   &      98.99 & 108027    & 97.85 & 84843 & 124.59 \\
2   & 1864240  &  1870395  &      99.67 & 1876731   & 99.33 & 1815243 & 102.70 \\
3   & 3227940  &  3244486  &      99.49 & 3257990   & 99.07 & 2906848 & 111.05 \\
4         & 2457044  &  2460375  &      99.86 & 2464644   & 99.69 & 2181173 & 112.65 \\
5     & 1092079  &  1138803  &      95.90 & 1177890   & 92.71 & 880232 & 124.07 \\
6          & 10727737 &  11069006 &      96.92 & 11413439  & 93.99 & 8426162 & 127.31  \\
7       & 14940827 &  15073904 &      99.12 & 15171942  & 98.47 &-&- \\
8     & 66864377  &  67972793  & 98.37 & 68812124 & 97.17 &-&- \\ \hline
9 & 70403690 & 75662919 & 93.05& 81715980 & 86.15 & 58285023 & 120.79 \\ \hline
10 & 83178 & 86608 & 96.04& 89503 & 92.93 & 80989 & 102.70 \\
11 & 22454406 & 23371923 & 96.07& 24144585 & 93.00 & 21868947 & 102.68 \\ \hline
12 & 96467 & 97865 & 98.57& 98714 & 97.80 & 87566 & 110.16 \\
13 & 7923290 & 8022933 & 98.76& 8146000 & 97.27 &  - & - \\ \hline
\end{tabular}
\caption{
Experimental result. From left to right, we report the dataset id, and the optimal shifted encoding size (opt-shift) followed by the average shifted encoding size (avg-shift), the ratio between opt-shift and avg-shift, the worst shifted encoding size (worst-shift), and the ratio between opt-shift and worst-shift. The last two columns show the optimal shifted ordered encoding size (opt-ord), and the ratio between opt-shift and opt-ord (only for datasets with small universe size $u$). }
\label{tab:summary}
\end{table}

\end{document}